 \def\ps@pprintTitle{%
  \def\@oddhead{\footnotesize \centerline{\textcopyright 2019 This manuscript version is made available under the CC-BY-NC-ND 4.0 license \url{http://creativecommons.org/licenses/by-nc-nd/4.0/}}}
  \let\@evenhead\@empty
  \def\@oddfoot{\footnotesize DOI:\text{ }\url{https://doi.org/10.1016/j.conengprac.2019.01.017}}%
 \let\@evenfoot\@oddfoot}
\newtheorem{theorem}{Theorem}[section]
\newtheorem{lemma}{Lemma}[section]
\newtheorem*{remark}{Remark}
\journal{Control Engineering Practice}
\begin{document}

\begin{frontmatter}

\title{Robust path-following control for articulated heavy-duty vehicles}


\author[electrical]{Filipe Marques Barbosa}
\ead{marquesfilipeb@usp.br}
\author[electrical]{Lucas Barbosa Marcos}
\ead{lucasbmarcos@usp.br}
\author[mechanical]{Maíra Martins da Silva}
\ead{mairams@sc.usp.br}
\author[electrical]{Marco Henrique Terra}
\ead{terra@sc.usp.br}
\author[electrical]{Valdir Grassi Junior\corref{correspondingauthor}}
\ead{vgrassi@usp.br}
\cortext[correspondingauthor]{Correspondence to: Department of Electrical and Computer Engineering, São Carlos School of Engineering, University of São Paulo, Av. Trabalhador São-carlense 400, 13566-590, São Carlos, SP, Brazil}

\address[electrical]{Department of Electrical and Computer Engineering, São Carlos School of Engineering, University of São Paulo, São Carlos, Brazil}
\address[mechanical]{Department of Mechanical Engineering, São Carlos School of Engineering, University of São Paulo, São Carlos, Brazil}
\begin{abstract}
Path following and lateral stability are crucial issues for autonomous vehicles. Moreover, these problems increase in complexity when handling articulated heavy-duty vehicles due to their poor manoeuvrability, large sizes and mass variation. In addition, uncertainties on mass may have the potential to significantly decrease the performance of the system, even to the point of destabilising it. These parametric variations must be taken into account during the design of the controller. However, robust control techniques usually require offline adjustment of auxiliary tuning parameters, which is not practical, leading to sub-optimal operation. Hence, this paper presents an approach to path-following and lateral control for autonomous articulated heavy-duty vehicles subject to parametric uncertainties by using a robust recursive regulator. The main advantage of the proposed controller is that it does not depend on the offline adjustment of tuning parameters. Parametric uncertainties were assumed to be on the payload, and an $\mathcal{H}_{\infty}$ controller was used for performance comparison. The performance of both controllers is evaluated in a double lane-change manoeuvre. Simulation results showed that the proposed method had better performance in terms of robustness, lateral stability, driving smoothness and safety, which demonstrates that it is a very promising control technique for practical applications.
\end{abstract}

\begin{keyword}
articulated vehicle; path following; lateral control; robust control; heavy-duty vehicle
\end{keyword}

\end{frontmatter}

\section{Introduction}

The advantages of autonomous vehicles are well-established in the academic literature. For example, reducing the number of accidents; easing the transportation of elderly and disabled people \cite{wu_2015}; offering more profitable means of transportation to industries and more efficient transportation methods to the military \cite{fu_2015,Shin2015}; improving ride comfort for passengers \cite{masserafilho_2017}; increasing road utilisation \cite{dias_2015}, etc.

Nowadays, heavy load vehicles are responsible for much of cargo transportation. The use of articulated heavy vehicles has been increasing due to their economic advantages \cite{kati2016robust}, freight transportation efficiency \cite{jujnovich2013path} and the growing demand for high capacity transport vehicles \cite{islam2015improve}. Furthermore, the same technologies used for autonomous cars can also be addressed to articulated heavy-duty vehicles \cite{fagnant2015preparing}, additionally increasing productivity and reducing cargo transportation costs \cite{noorvand2017autonomous}.

In the literature, different control techniques have been used to solve the path-following problem for autonomous vehicles. Alcala \textit{et al.} \cite{Alcala2018} used a Lyapunov-based technique with linear quadratic regulator - linear matrix inequality (LQR-LMI) tuning to solve the problem of guidance in an autonomous vehicle. Ji \textit{et al.} \cite{Ji2018} proposed a robust steering controller based on a backstepping variable structured control to maintain the yaw stability and minimise the lateral error. Mitraji \textit{et al.} \cite{Matraji2018} designed and implemented an adaptive Second Order Sliding Mode Control for a four wheels Skid-Steered Mobile Robot. The objective was to follow a predefined trajectory in the presence of disturbance and parametric uncertainties. Chu \textit{et al.} \cite{Chu2018} applied an active disturbance rejection control to a steering controller design with the aim to guarantee the lane keeping of the vehicle in the presence of uncertainties and external disturbance. Lastly, Hu \textit{et al.} \cite{Hu2016} presented an $\mathcal{H}_{\infty}$ output-feedback control strategy based on the mixed genetic algorithms and linear matrix inequality to perform the path following of autonomous ground vehicles. 

In addition, some authors have proposed the use of an active trailer steering system to improve path following and attitude control of articulated vehicles \cite{jujnovich2013path,Kim2016,Guan2017}. For instance, different vehicle conditions have been considered by Guan \textit{et al.} \cite{Guan2017} for deriving a model predictive control strategy. Regarding autonomous articulated vehicles, some control design strategies have been exploited in the literature. Yuan \textit{et al.} \cite{Yuan2016} proposed a lateral-longitudinal control scheme using automatic steering strategies to avoid jackknifing, considering input limitations. Micha{\l}ek \cite{Michaek2014} presented a highly scalable nonlinear cascade-like control to solve the path-following problem for articulated robotic vehicles equipped with arbitrary number of off-axle hitched trailers. With respect to the path-following problem for articulated vehicles, an active steering controller of the tractor and trailer based on LQR was designed by Kim \textit{et al.} \cite{Kim2016}, whilst a novel sliding mode controller was proposed by Nayl \textit{et al.} \cite{Nayl2018}. However, the autonomous control of articulated heavy-duty vehicles remains an issue. As payload may be much greater than vehicle weight itself \cite{kati2016robust}, mass is a critical parameter in vehicle dynamics and those vehicles are especially affected by mass variations. Hence, a control technique that overcomes the parametric uncertainties in the vehicle model is necessary, and it ensures system stability and performance objectives for a range of parameter values \cite{kati2016robust}. This leads to the need of robust controllers designed to withstand mass variations. 

Kati \textit{et al.} \cite{kati2016robust} proposed an $\mathcal{H}_{\infty}$ controller to deal with uncertainties on payload of the vehicle. However, as the $\mathcal{H}_{\infty}$ controller depends on the offline adjustment of the auxiliary parameter $\gamma$, this results in sub-optimal controller operation due to the mass variations. The $\mathcal{H}_{\infty}$ controller is furthermore robust, but it cannot ensure smoothness for steering control applications. In fact, the lower the $\gamma$ value, the more optimality condition  the controller reaches. On the other hand, it cannot guarantee driving smoothness as there is no parameter to deal with this. Consequently, a mixed $\mathcal{H}_{2}$/$\mathcal{H}_{\infty}$ controller is used in the literature, where smoothness, and robustness and optimisation are respectively handled \cite{Scherer1995}.  In order to address the sub-optimality problem, the contribution of this paper is a novel approach for the lateral control of an autonomous articulated heavy-duty vehicle, based on a Robust Linear Quadratic Regulator (RLQR) presented in \cite{Cerri2014} and \cite{cerri2009recursive}. The main advantage of the proposed controller is that it does not require any auxiliary tuning parameters, since both smoothness and robustness are already foreseen through a certain penalty parameter $\mu$, which vanishes in the limit when it tends to infinity. This feature  maintains the optimality for the full range of parametric uncertainties. This is additionally useful for online applications. A continuous-time model for the articulated vehicle in state-space form is presented. Then, the model is discretised in order to apply discrete RLQ control for solving a path-following problem.

Since $\mathcal{H}_\infty$ control is widely used for path-tracking problems \cite{Hu2016} and for robustifying the control strategy in automotive applications \cite{LI2018,ZHAO2018}, a standard $\mathcal{H}_\infty$ controller is also applied to the same plant for the sake of comparison. Uncertainties on vehicle mass are introduced, then the performance of both controllers is compared in different cases. Simulation tests evaluate robustness, steering behaviour, truck displacement error and orientation error.

The RLQR ensures stability for a range of possible payloads. On the other hand, the $\mathcal{H}_\infty$ controller is dependent on the auxiliary parameter $\gamma$. Therefore, it cannot maintain good performance (or even stability) for a wide range of payloads, unless $\gamma$ is adjusted offline \cite{Cerri2014}.

The paper is organised as follows: \autoref{system_modeling} presents both the model of a heavy articulated vehicle in continuous-time state-space form and a path-following model, which are properly put together to make a single model; \autoref{robust_recursive_regulator} exhibits the RLQR, showing how it is derived from a quadratic cost function and a robust regularised least squares problem; \autoref{application_and_results} shows and discusses the application of the RLQR and its results compared to an $\mathcal{H}_\infty$ controller; \autoref{conclusions} brings the conclusions.

\section{System modelling}
\label{system_modeling}

With the aim to make the articulated heavy-duty vehicle follow a desired path, it is not only necessary to minimise the lateral offset and heading error, but also ensure the vehicle stability. Therefore, the system modelling must take into account the path following and dynamic variables. This section introduces the vehicle model for simulations and control design.

\subsection{Path-following model}
\label{path-following-model}

In order to solve the path-following problem, the lateral controller aims to reduce lateral displacement and orientation angle errors of the towing vehicle. Therefore, the path-following model adopted here is based on the equations presented by Skjetne and Fossen \cite{skjetne2001nonlinear}. \autoref{path_following} shows the schematic diagram of path-following model for an articulated vehicle, where $\dot{y}_{1}$ is tractor lateral velocity and $v$ is tractor longitudinal velocity. The lateral displacement of the vehicle to a given reference path is the distance $\rho$ from tractor centre of gravity to the closest point $D$ on the desired path. The tractor orientation error is defined as $\theta = \psi_{1}-\psi_{des}$, where $\psi_{1}$ and $\psi_{des}$ are the current and desired orientation angles of the tractor, respectively.
\begin{figure}[h]
      \centering
      \includegraphics[scale=0.5]{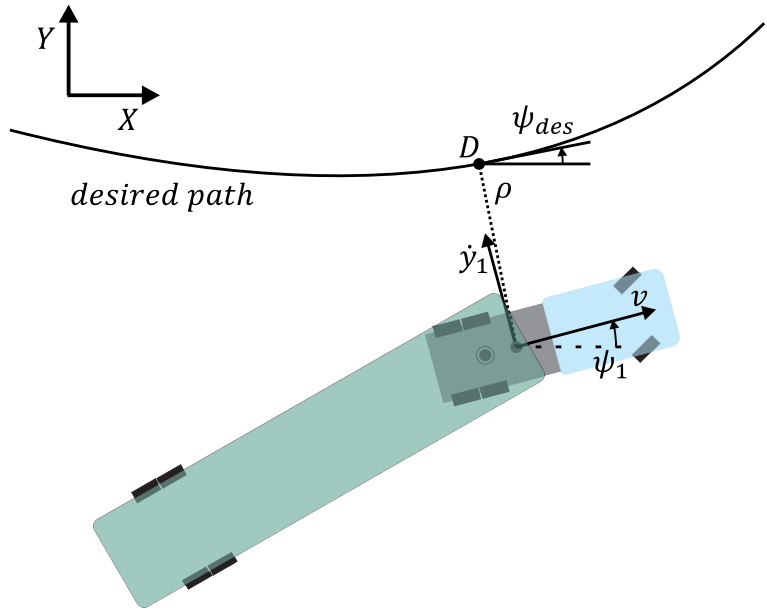}
      \caption{Schematic diagram for path-following model.}
      \label{path_following}
\end{figure}

Based on Serret-Frenet equations \cite{skjetne2001nonlinear}, the path-following model of the autonomous ground vehicle is expressed as 
\begin{equation}
\label{path-following-equation-1}
\begin{aligned}
\dot{\rho} &= v\sin\theta+\dot{y}_{1}\cos\theta\\
\dot{\theta} &= \dot{\psi}.
\end{aligned}
\end{equation}
The displacement error $\rho$ can be rewritten in the linear form by assuming that the orientation error $\theta$ is small, as follows
\begin{equation}
\label{path-following-equation-2}
\dot{\rho} = v\theta+\dot{y}_{1}.
\end{equation}

\subsection{Articulated vehicle model}
\label{single-track-model}
Single-track models are widely used in literature \cite{kati2016robust,jujnovich2013path,Alcala2018,Ji2018} to describe the vehicle lateral behaviour without much modelling and parametrisation effort \cite{schramm2014vehicle}. These assume that the vehicle can be described by only one equivalent track in each axle, linked by the vehicle body. Consequently, it only takes into account the planar movement of the vehicle, disregarding roll and pitch effects. The nonholonomic linear model adopted here is based on bicycle model presented by van de Molengraft-Luijten \textit{et al.} \cite{van2012analysis}.

\autoref{single-track} shows the
free body diagram of a vehicle with one articulation, where the following assumptions are adopted:
\begin{itemize}
\item Differences between left and right track are ignored;
\item Vehicle velocity parameter is constant;
\item The mass of each unit is assumed to be concentrated at the centre of gravity;
\item Lateral tyre forces are proportional to the tyre slip angles;
\item There is no load transfer.
\end{itemize}
\begin{figure}[h]
      \centering
      \includegraphics[scale=0.3]{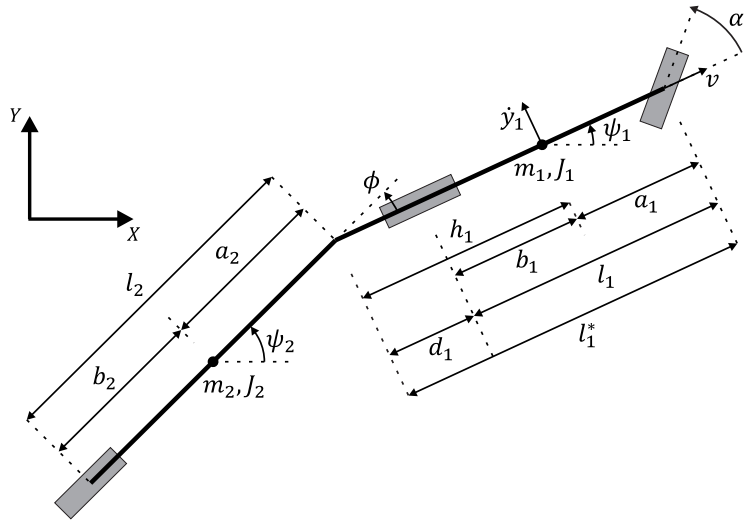}
      \caption{Articulated vehicle single-track model.}
      \label{single-track}
\end{figure}

\autoref{parameters} details the parameters of the articulated vehicle shown in \autoref{single-track}. Note that hitch point may be positioned behind the towing vehicle rear axle (e.g. truck-full trailers where $h_{1}>b_{1}$ and $d_{1}>0$) or in front of it (e.g. tractor-semitrailers where $h_{1}<b_{1}$ and $d_{1}<0$).
\begin{table}[H]
\caption{Description of vehicle parameters}
\label{parameters}
\begin{center}
\begin{tabular}{c l c}
\hline
Parameter & Meaning & Unit\\
\hline
$a_{1}$ & Distance from the front axle to the tractor centre of gravity & $m$\\
$a_{2}$ & Distance from the coupling point to the trailer centre of gravity & $m$\\
$b_{1}$ & Distance from the tractor rear axle to the tractor centre of gravity & $m$\\
$b_{2}$ & Distance from the trailer axle to the trailer centre of gravity & $m$\\
$l_{1}$ & Tractor wheelbase & $m$\\
$l_{2}$ & Trailer wheelbase & $m$\\
$d_{1}$ & The distance between the tractor rear axle and the coupling point & $m$\\
$h_{1}$ & The distance between coupling point and the tractor centre of gravity & $m$\\
$l_{1}^{*}$ & The distance between the tractor front axle and the coupling point & $m$\\
$v$ & Forward velocity & $m/s$\\
$\dot{y}_{1}$ & Lateral Velocity & $m/s$\\
$m_{1}$ & Tractor mass & $kg$\\
$m_{2}$ & Trailer mass & $kg$\\
$J_{1}$ & Tractor moment of inertia & $kg \hspace{1mm} m^{2}$\\
$J_{2}$ & Trailer moment of inertia & $kg \hspace{1mm} m^{2}$\\
$\psi_{1}$ & Tractor yaw & $rad$\\
$\psi_{2}$ & Trailer yaw & $rad$\\
$\alpha$ & Steering angle & $rad$\\
$\phi$ & Articulation angle & $rad$\\
\hline
\end{tabular}
\end{center}
\end{table}

In order to improve the path following, lateral displacement $\rho$ and orientations error $\theta$ must be as small as possible. In addition, it is necessary to ensure vehicle stability. Hence, the lateral velocity $\dot{y}_{1}$, yaw rate $\dot{\psi}_{1}$, articulation angle rate $\dot{\phi}$ and articulation angle $\phi$ must be well controlled.

The motion equation of the articulated vehicle can be expressed as
\begin{equation}
M\dot{x} = Ax+B\alpha,
\end{equation}
with the state vector defined as $x = [\dot{y}_{1}, \dot{\psi}_{1},\dot{\phi}, \phi, \rho, \theta]^{T}$. Therefore, the state-space description of the path-following model for the articulated heavy-duty vehicle is written as
\begin{multline}
\label{state-space}
\begin{bmatrix}
m_{1}+m_{2} & -m_{2}(h_{1}+a_{2}) & -m_{2}a_{2} & 0 & 0 & 0\\
-m_{2}h_{1} & J_{1}+m_{2}h_{1}(h_{1}+a_{2}) & m_{2}h_{1}a_{2} & 0& 0 & 0\\
-m_{2}a_{2} & J_{2}+m_{2}a_{2}(h_{1}+a_{2}) & J_{2}+m_{2}a_{2}^{2} & 0& 0 & 0\\
0 & 0 & 0 & 1 & 0 & 0\\
0 & 0 & 0 & 0 & 1 & 0\\
0 & 0 & 0 & 0 & 0 & 1
\end{bmatrix}
\dot{x}
=\\
\begin{bmatrix}
\frac{-c_{1}-c_{2}-c_{3}}{v} & \frac{c_{3}(h_{1}+l_{2})-a_{1}c_{1}+b_{1}c_{2}-(m_{1}+m_{2})v^{2}}{v} & \frac{c_{3}l_{2}}{v} & c_{3} & 0 & 0\\
\frac{c_{3}h_{1}-a_{1}c_{1}+b_{1}c_{2}}{v} & \frac{m_{2}h_{1}v^{2}-a_{1}^{2}c_{1}-b_{1}^{2}c_{2}-c_{3}h_{1}(h_{1}+l_{2})}{v} & \frac{-c_{3}h_{1}l_{2}}{v} & -c_{3}h_{1} & 0 & 0\\
\frac{c_{3}l_{2}}{v} & \frac{m_{2}a_{2}v^{2}-c_{3}l_{2}(h_{1}+l_{2})}{v} & \frac{-c_{3}l_{2}^{2}}{v} & -c_{3}l_{2} & 0 & 0\\
0 & 0 & 1 & 0 & 0 & 0\\
1 & 0 & 0 & 0 & 0 & v\\
0 & 1 & 0 & 0 & 0 & 0
\end{bmatrix}
x
+
\begin{bmatrix}
c_{1}\\
a_{1}c_{1}\\
0\\
0\\
0\\
0
\end{bmatrix}
\alpha,
\end{multline}
where the vehicle steering angle $\alpha$ is the control input, $c_{1}$, $c_{2}$ and $c_{3}$ are the cornering stiffness of the tractor front axle, tractor rear axle and trailer axle, respectively.

Many studies consider constant cornering stiffness. However, this hypothesis is not considered in this work since these coefficients may vary according to several vehicle parameters. In fact, Fancher demonstrated in \cite{fancher1989directional} (as cited in \cite{luijten2010lateral}) that the relation between the tyre cornering stiffness and the vertical load forces are approximately linear for truck tyres. The coefficient of proportionality is given by a normalised cornering stiffness $f_{j}$, and the cornering stiffness $c_{j}$ scales linearly with the vertical load force of the axle $F_{z_{j}}$. Therefore, the cornering stiffness parameters are calculated as:
\begin{equation}
\label{cornering-stiffnes}
c_{j} = f_{j}F_{z_{j}} \text{ with } j = 1,\hdots,p,
\end{equation}
where $p$ is the number of axles in the vehicle, $j = 1$ corresponds to the tractor front axle, $j = 2$ to the tractor rear axle and $j = 3$ to the trailer axle. 

The vertical force in each axle can be calculated as
\begin{equation}
\label{vertical-force}
\begin{aligned}
F_{z_{1}} &= m_{1}g\frac{b_{1}}{l_{1}}-m_{2}g\frac{b_{2}d_{1}}{l_{2}l_{1}}\\
F_{z_{2}} &= m_{1}g\frac{a_{1}}{l_{1}}+m_{2}g\frac{b_{2}l^{*}_{1}}{l_{2}l_{1}}\\
F_{z_{3}} &= m_{2}g\frac{a_{2}}{l_{2}},
\end{aligned}
\end{equation}
where $g$ is the gravitational acceleration. Moreover, Houben \cite{houben2008analysis} (as cited in \cite{van2012analysis}) observed that the normalised cornering stiffness of trailer tyres, drive and steer are approximately the same. Therefore, it is assumed $f_{1} \approx f_{2} \approx f_{3}$.

Nevertheless, a discrete state-space representation of the system is necessary in order to perform the robust recursive control for time-varying linear systems subject to parametric uncertainties. Hence, the System \eqref{state-space} is discretised by using the Tustin method. 

\section{Robust recursive regulator}\label{robust_recursive_regulator}

The goal of the Robust Linear Quadratic Regulator (RLQR) is to minimise a given cost function subject to the maximum influence of parametric uncertainties. It is made by implementing an optimal feedback law in the form $u_i=K_ix_i$, where $K_i$ is the feedback gain. This section describes the robust recursive regulator presented by Terra \textit{et al.} in \cite{Cerri2014} and Cerri \textit{et al.} in \cite{cerri2009recursive}. 

\subsection{Problem formulation}
Consider the following discrete-time linear system subject to parametric uncertainties
\begin{equation}
\label{eq:statespaceDT}
x_{i+1} = (F_{i} + \delta F_{i})x_{i} + (G_{i}  + \delta G_{i})u_{i},
\end{equation}
where $i = 0,\hdots, N$, $x_{i} \in \mathbb{R}^{n}$ is the state vector, $u_{i} \in \mathbb{R}^{m}$ is the control input, and $F_{i} \in \mathbb{R}^{n \times n}$ and $G_{i} \in \mathbb{R}^{n \times m}$ are known nominal model matrices. Uncertainty matrices $\delta F_{i}$ and $\delta G_{i}$ represent parametric uncertainties modelled as
\begin{equation}
\label{eq:RLQRuncertainties}
\begin{bmatrix}
\delta {F}_{i} & \delta G_{i}
\end{bmatrix} = H_{i} \Delta_{i} \begin{bmatrix}
E_{F_{i}} & E_{G_{i}}
\end{bmatrix},
\end{equation}
where $i = 0,\hdots, N$; $H_{i} \in \mathbb{R}^{n \times p}$; $E_{F_{i}} \in \mathbb{R} ^{l \times n}$ and $E_{G_{i}} \in \mathbb{R} ^{l \times m}$ are known matrices; and $\Delta_{i} \in \mathbb{R}^{p \times l}$ is an arbitrary matrix such that $||\Delta|| \leq 1$.

In order to obtain the Robust Linear Quadratic Regulator, the following optimisation problem must be solved \cite{Cerri2014}:
\begin{equation}
\label{eq:rlqrminmax}
\underset{x_{i+1},u_{i}}{min} \ \underset{{\delta} F_{i},{\delta} G_{i}}{max} {\bar{J}^{\mu}_{i}(x_{i+1},u_{i},{\delta} F_{i},{\delta} G_{i})},
\end{equation}
where $\bar{J}^{\mu}_{i}$ is the cost function
\begin{align}
\label{eq:rqlqrcostfun}
\nonumber
& \bar{J}^{\mu}_i(x_{i+1},u_{i},{\delta} F_{i},{\delta} G_{i}) = \\
& \begin{bmatrix}
x_{i+1}\\
u_i
\end{bmatrix}^T
\begin{bmatrix}
P^{r}_{i+1} & 0 \\
0 & R_i
\end{bmatrix} 
\begin{bmatrix}
x_{i+1}\\
u_i
\end{bmatrix} + \Phi^T 
\begin{bmatrix}
Q_i & 0\\
0 & \mu I
\end{bmatrix}\Phi,
\end{align}
with fixed penalty parameter $\mu > 0$, weighing matrices $Q_{i} \succ 0$, $R_{i} \succ 0$, $P_{i+1} \succ 0$ and
\begin{equation*}
\Phi = \left\{
\begin{bmatrix}
0 & 0 \\
I & -G_i-\delta G_i
\end{bmatrix}
\begin{bmatrix}
x_{i+1}\\
u_i
\end{bmatrix}
- \begin{bmatrix}
-I\\
F_i+{\delta} F_i
\end{bmatrix}
x_i\right\}.
\end{equation*}
Details on penalty function can be seen in \cite{cerri2009recursive}.

\begin{remark}
The optimisation problem (\ref{eq:rlqrminmax})-(\ref{eq:rqlqrcostfun}) is a particular case of the robust least-squares problem and will be treated below.
\end{remark}

\subsection{Regularised least squares}

Consider the least-square minimisation problem defined by
\begin{equation}
\label{robust_regularized_min_sqr_prob_a}
\min_{x\in{\mathbb{R}^m}}\{J(x)\},
\end{equation}
where $J(x)$ is a regularised quadratic functional
\begin{equation}
\label{robust_regularized_min_sqr_prob_b}
\begin{split}
J(x)&=\|x\|^{2}_{Q}+\|Ax-b\|^{2}_{W}\\&=\,x^{T}Qx+(Ax-b)^TW(Ax-b),
\end{split}
\end{equation}
with $Q \in \mathbb{R}^{m \times m}$ (regularisation matrix) and $W \in \mathbb{R}^{m \times n}$ symmetric positive definite, $A \in \mathbb{R}^{n \times n}$ and $b \in \mathbb{R}^{n}$ known, and $x \in \mathbb{R}^{m}$ the unknown vector.
\begin{lemma}
\label{lemma1}
The optimal solution for the problem (\ref{robust_regularized_min_sqr_prob_a})-(\ref{robust_regularized_min_sqr_prob_b}) is
\begin{equation}
x^{\ast}=\left(Q+A^{T}WA\right)^{-1}A^{T}Wb. \nonumber
\end{equation}
\end{lemma}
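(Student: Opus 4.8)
The plan is to treat this as an unconstrained, strictly convex minimisation and apply the first-order optimality condition, since (\ref{robust_regularized_min_sqr_prob_b}) is exactly the standard regularised (ridge-type) least-squares functional. First I would expand
\[
J(x) = x^{T}Qx + (Ax-b)^{T}W(Ax-b),
\]
and, using the symmetry of $W$ to merge the two cross terms, rewrite it as the quadratic form
\[
J(x) = x^{T}\left(Q + A^{T}WA\right)x - 2\,b^{T}WA\,x + b^{T}Wb.
\]

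Next I would differentiate with respect to $x$ and impose stationarity $\nabla J(x)=0$. Because $Q$ and $A^{T}WA$ are symmetric, this yields
\[
\nabla J(x) = 2\left(Q + A^{T}WA\right)x - 2\,A^{T}Wb = 0,
\]
that is, the normal equations $\left(Q + A^{T}WA\right)x = A^{T}Wb$. Solving for $x$ produces the claimed expression $x^{\ast}=\left(Q + A^{T}WA\right)^{-1}A^{T}Wb$, conditional on the matrix $Q + A^{T}WA$ being invertible.

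The only genuine subtlety — and thus the step I would treat most carefully — is justifying this invertibility and confirming that the stationary point is truly the global minimiser rather than merely a critical point. I would argue that since $W\succ0$ we have $A^{T}WA\succeq0$, and together with $Q\succ0$ (consistent with the weighting assumptions $Q_i\succ0$ used in the cost function) the sum $Q + A^{T}WA$ is positive definite, hence invertible; this gives existence and uniqueness of the solution to the normal equations. The same positive definiteness shows the Hessian $\nabla^{2}J = 2\left(Q + A^{T}WA\right)$ is positive definite everywhere, so $J$ is strictly convex and its unique stationary point is the unique global minimiser, which completes the argument. An equivalent route, if one prefers an algebraic rather than a calculus-based justification, would be to complete the square in the quadratic form above; this exhibits $x^{\ast}$ as the minimiser directly and simultaneously yields the optimal value.
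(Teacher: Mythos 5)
Your proof is correct: the expansion of $J(x)$, the normal equations $\left(Q+A^{T}WA\right)x = A^{T}Wb$, and the positive-definiteness argument (using $Q\succ 0$ and $A^{T}WA\succeq 0$ from $W\succ 0$) for both invertibility and strict convexity are all sound and complete. The paper itself offers no written proof for this lemma---it simply cites Kailath's linear estimation text---and your argument is exactly the standard one that reference contains, so there is nothing to reconcile between the two.
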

\begin{proof}
See \cite{kailath2000linear}.
\end{proof}

\subsection{Robust regularised least-squares problem}
In the regularised least-squares problem established in (\ref{robust_regularized_min_sqr_prob_a})-(\ref{robust_regularized_min_sqr_prob_b}), now suppose that the matrix $A$ and the vector $b$ are under influence of uncertainties $\delta A$ and $\delta b$, respectively. Consider the min-max optimisation problem defined in \cite{sayed1999design} in the form: 
\begin{equation}
\label{rob_reg_least_sqr_prob}
\min_{x} \max_{\delta{A},\delta{b}}\{J(x,\delta{A},\delta{b})\},
\end{equation}
with $J(x,\delta{A},\delta{b})$ given by
\begin{equation}
\label{func_min_quad_reg_com_inct}
\begin{array}{c}
J(x,\delta{A},\delta{b})=\|x\|^{2}_{Q}+\|(A+\delta{A})x-(b+\delta{b})\|^{2}_{W},
\end{array}
\end{equation}
and the uncertainties $\delta A$ and $\delta b$ modelled as
\begin{equation}
\label{uncertain_model}
\begin{bmatrix}
\,\delta{A} & \delta{b\,}
\end{bmatrix}
=H\Delta{\begin{bmatrix} E_{A} & E_{b}
\end{bmatrix}},
\end{equation}
with $A$, $b$, $H$, $E_{A}$, $E_{b}$, $Q$ and $W$ known matrices, $\Delta$ a contraction arbitrary matrix ($\|\Delta\|\leq{1}$) and $x$ an unknown vector. The optimal solution for the problem (\ref{rob_reg_least_sqr_prob})-(\ref{uncertain_model}) is given below. See demonstration details in \cite{sayed1999design}, where a general result is proposed.
\begin{lemma}
\label{lemma_lambda}
The optimisation problem (\ref{rob_reg_least_sqr_prob})-(\ref{uncertain_model}) has a unique solution
\begin{equation}
\label{sol_oti_RLS_min_max_unica}
x^{\ast}=\left(\hat{Q}+A^{T}\hat{W}A\right)^{-1}\left(A^{T}\hat{W}b+\hat{\lambda}E_{A}^{T}E_{b}\right), \nonumber
\end{equation}
with $\hat{Q}$ and $\hat{W}$ defined as
\begin{eqnarray}
\hat{Q}&:=&Q+\hat{\lambda}{E_{A}^{T}E_{A}},  \nonumber \\
\hat{W}&:=&W+WH(\hat{\lambda}I-H^{T}WH)^{\dagger}H^{T}W.  \nonumber
\end{eqnarray}
The non-negative scalar parameter obtained from the minimisation problem 
\begin{equation} \label{lambda_otimo}
\hat{\lambda}\, = \,\arg\min_{\lambda\geq{\|H^{T}WH\|}}\left\{\Gamma(\lambda)\right\},  \nonumber
\end{equation}
where $\Gamma(\lambda)\,:=\,\|x(\lambda)\|^{2}_{Q}+\lambda\|E_{A}x(\lambda)-E_{b}\|^{2}+\|Ax(\lambda)-b\|^{2}_{W(\lambda)}$ with
\begin{eqnarray}
{Q}(\lambda)\,&:=&\,Q+{\lambda}{E_{A}^{T}E_{A}},  \nonumber \\
{\tilde Q}(\lambda)\,&:=&{Q}(\lambda)+A^{T}{W}(\lambda)A,  \nonumber \\
{W}(\lambda)\,&:=&\,W+WH({\lambda}I-H^{T}WH)^{\dagger}H^{T}W,  \nonumber
\\
x(\lambda)\,&:=&\,\tilde Q(\lambda)^{-1}\left(A^{T}{W}(\lambda)b+{\lambda}E_{A}^{T}E_{b}\right). \nonumber
\end{eqnarray}
\begin{proof}
See \cite{sayed1999design}
\end{proof}
\end{lemma}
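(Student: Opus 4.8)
The plan is to collapse the min--max problem into a pure scalar minimization by evaluating the inner maximization in closed form. First I would fix $x$ and set $r = Ax - b$ and $e = E_{A}x - E_{b}$, so that the uncertainty enters only through the vector $\Delta e$. Because $\|\Delta\| \le 1$, the set of attainable vectors $z := \Delta e$ is exactly the ball $\{z : \|z\| \le \|e\|\}$: the bound $\|\Delta e\| \le \|\Delta\|\,\|e\| \le \|e\|$ gives one inclusion, while any $z$ in that ball is realised by the rank-one choice $\Delta = z e^{T}/\|e\|^{2}$. Hence the inner maximization reduces to
\[
\max_{\|z\| \le \|e\|}\ \|r + Hz\|_{W}^{2},
\]
the maximization of a convex quadratic over a Euclidean ball, whose optimum lies on the boundary $\|z\| = \|e\|$.

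The key step is to dualize this constrained quadratic. Introducing a multiplier $\lambda$ for the constraint $\|z\|^{2} \le \|e\|^{2}$ and invoking the S-procedure (which is lossless for a single quadratic constraint), I would rewrite the inner maximum as
\[
\min_{\lambda \ge \|H^{T}WH\|}\ \Big[\,\|Ax - b\|_{W(\lambda)}^{2} + \lambda\,\|E_{A}x - E_{b}\|^{2}\,\Big].
\]
The threshold $\lambda \ge \|H^{T}WH\|$ is exactly the requirement that the unconstrained $z$-maximand $-z^{T}(\lambda I - H^{T}WH)z + 2\,r^{T}WHz$ stay bounded above (i.e. $\lambda I - H^{T}WH \succeq 0$); maximizing over $z$ then produces the Schur-complemented weight $W(\lambda) = W + WH(\lambda I - H^{T}WH)^{\dagger}H^{T}W$ appearing in the statement.

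Having eliminated $\Delta$, I would append the regularization term $\|x\|_{Q}^{2}$ and interchange the two minimizations, turning the whole problem into $\min_{\lambda \ge \|H^{T}WH\|}\Gamma(\lambda)$ with
\[
\Gamma(\lambda) = \min_{x}\ \Big[\,\|x\|_{Q}^{2} + \lambda\,\|E_{A}x - E_{b}\|^{2} + \|Ax - b\|_{W(\lambda)}^{2}\,\Big].
\]
For each fixed $\lambda$ the bracketed functional is a regularized quadratic in $x$ with regularization matrix $Q(\lambda) = Q + \lambda E_{A}^{T}E_{A}$ and data weight $W(\lambda)$, so \autoref{lemma1} yields at once the closed-form minimizer $x(\lambda) = \tilde{Q}(\lambda)^{-1}\big(A^{T}W(\lambda)b + \lambda E_{A}^{T}E_{b}\big)$ recorded in the lemma. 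Substituting the minimizing $\hat{\lambda}$ into $x(\lambda)$, $Q(\lambda)$ and $W(\lambda)$ then produces $x^{\ast}$, $\hat{Q}$ and $\hat{W}$.

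I expect the delicate part to be the lossless dualization of the inner maximization: justifying that the constrained quadratic maximum over the ball equals the scalar minimization, and in particular handling the boundary case in which $\lambda I - H^{T}WH$ is singular, which forces the pseudoinverse $\dagger$ together with a range/consistency condition on $H^{T}Wr$. The remaining work is confirming well-posedness and uniqueness: strict positivity $Q \succ 0$ makes the inner $x$-minimization strictly convex, so $x(\lambda)$ is unique, and the convexity of $\Gamma$ on $[\,\|H^{T}WH\|,\infty)$ pins down a unique $\hat{\lambda}$, which secures uniqueness of $x^{\ast}$.
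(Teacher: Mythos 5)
Your overall strategy is the right one, and it is in fact the strategy of the source the paper leans on: the paper itself offers no proof of this lemma at all --- its ``proof'' is the citation \cite{sayed1999design} --- so your proposal is effectively a reconstruction of that reference's argument. The core steps are sound. The attainable set of $z=\Delta e$ is exactly the ball of radius $\|e\|$ (your rank-one realisation $\Delta = ze^{T}/\|e\|^{2}$ handles $e\neq 0$, and the case $e=0$ is trivial); the inner maximisation of a convex quadratic over that ball admits a lossless dualisation because only a single quadratic constraint is involved (S-lemma / trust-region duality), with $\lambda \geq \|H^{T}WH\|$ being precisely the condition for the Lagrangian supremum over $z$ to be finite and with the pseudoinverse covering the degenerate boundary case; the interchange of the two minimisations is legitimate since both are minimisations; and \autoref{lemma1} applied for fixed $\lambda$ to the weights $Q(\lambda)$, $\lambda I$, $W(\lambda)$ gives exactly $x(\lambda)=\tilde{Q}(\lambda)^{-1}\left(A^{T}W(\lambda)b+\lambda E_{A}^{T}E_{b}\right)$.

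The one genuine gap is your uniqueness argument. You claim that $\Gamma$ is convex on $[\|H^{T}WH\|,\infty)$ and that this ``pins down a unique $\hat{\lambda}$.'' Neither half holds up. First, $\Gamma(\lambda)=\min_{x}G(x,\lambda)$ is a partial minimisation of a function that is convex in $x$ for fixed $\lambda$ and convex in $\lambda$ for fixed $x$, but \emph{not} jointly convex: the coupling term $\lambda\|E_{A}x-E_{b}\|^{2}$ has an indefinite Hessian in $(x,\lambda)$ (it is $\lambda\|e\|^{2}$, not the jointly convex perspective $\|e\|^{2}/\lambda$), and partial minimisation preserves convexity only under joint convexity, so convexity of $\Gamma$ is unjustified. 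Second, even granting convexity, a convex function can attain its minimum on an interval, so uniqueness of $\hat{\lambda}$ would still not follow. The clean repair, which is also how the cited reference secures the claim, is to prove uniqueness of $x^{\ast}$ directly and bypass $\hat{\lambda}$ altogether: since $Q\succ 0$, every $J(x,\delta A,\delta b)$ in (\ref{func_min_quad_reg_com_inct}) is strictly convex in $x$ with Hessian bounded below by $2Q$ uniformly over $\|\Delta\|\leq 1$, so the worst-case cost $x\mapsto\max_{\delta A,\delta b}J(x,\delta A,\delta b)$ is a pointwise maximum of uniformly strictly convex functions and is itself strictly convex; hence its minimiser $x^{\ast}$ is unique. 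Then for any minimising $\hat{\lambda}$ one has $\Phi(x(\hat{\lambda}))\leq G(x(\hat{\lambda}),\hat{\lambda})=\min_{\lambda}\Gamma(\lambda)=\min_{x}\Phi(x)$, where $\Phi$ denotes that worst-case cost, so $x(\hat{\lambda})$ is a minimiser of $\Phi$ and therefore equals $x^{\ast}$ regardless of whether $\hat{\lambda}$ is unique. With that substitution your outline is a faithful account of the proof the paper delegates to \cite{sayed1999design}.
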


For this type of problem, it is appropriate to redefine \autoref{lemma_lambda} in terms of an array of matrices. The following lemma shows an optimal solution for the problem (\ref{rob_reg_least_sqr_prob})-(\ref{uncertain_model}) in an alternative structure to this fundamental theorem.
\begin{lemma}
\label{lemma2}
Suppose $Q\succ{0}$ and $W\succ{0}$. The solution $x^{*}$ for the problem (\ref{rob_reg_least_sqr_prob})-(\ref{uncertain_model}) can be rewritten as 
\begin{equation}
\setlength{\arraycolsep}{1pt}
\begin{bmatrix}
x^{\ast} \\
J(x^{\ast})
\end{bmatrix}=\begin{bmatrix}
0 & 0 \\ 0 & b \\ 0 & E_{b} \\ I & 0 \end{bmatrix}^{T}\begin{bmatrix}
Q^{-1}&0&0&I\\
0&\hat{W}^{-1}&0&A\\
0&0&\hat{\lambda}^{-1}I&E_A\\
I&A^T&E_A^T&0
\end{bmatrix}^{-1}
\begin{bmatrix}
0\\
b\\
E_b\\
0
\end{bmatrix}, \nonumber
\end{equation}
with $\hat{W}$ and $\hat{\lambda}$ as in \autoref{lemma_lambda}.
\end{lemma}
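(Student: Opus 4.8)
The plan is to recognise the central matrix as a symmetric bordered (saddle-point) matrix, invert it with a single Schur complement, and then check that the left and right multipliers extract precisely the $x^{*}$ and $J(x^{*})$ of \autoref{lemma_lambda}. Writing the array as $\mathcal{M}=\begin{bmatrix}D&C\\C^{T}&0\end{bmatrix}$ with $D=\mathrm{diag}(Q^{-1},\hat{W}^{-1},\hat{\lambda}^{-1}I)$ and $C=\begin{bmatrix}I\\A\\E_{A}\end{bmatrix}$, the hypotheses $Q\succ0$, $\hat{W}\succ0$ and $\hat{\lambda}>0$ make $D$ invertible with $D^{-1}=\mathrm{diag}(Q,\hat{W},\hat{\lambda}I)$, so $\mathcal{M}^{-1}$ is governed entirely by the Schur complement of $D$.

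First I would evaluate that Schur complement, $S=-C^{T}D^{-1}C=-(Q+A^{T}\hat{W}A+\hat{\lambda}E_{A}^{T}E_{A})=-(\hat{Q}+A^{T}\hat{W}A)$, where the last step uses $\hat{Q}=Q+\hat{\lambda}E_{A}^{T}E_{A}$ from \autoref{lemma_lambda}. The standard bordered-inverse formula then gives the lower-right block of $\mathcal{M}^{-1}$ as $S^{-1}$ and the lower-left block as $-S^{-1}C^{T}D^{-1}$. Multiplying $\mathcal{M}^{-1}$ by the right-hand vector (whose last block is zero) and keeping the bottom block yields $-S^{-1}C^{T}D^{-1}\begin{bmatrix}0\\b\\E_{b}\end{bmatrix}=(\hat{Q}+A^{T}\hat{W}A)^{-1}(A^{T}\hat{W}b+\hat{\lambda}E_{A}^{T}E_{b})$, which is exactly the $x^{*}$ of \autoref{lemma_lambda}. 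Since the first block row of the left multiplier is $\begin{bmatrix}0&0&0&I\end{bmatrix}$, it selects this bottom block, so the first output component is $x^{*}$, as required.

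It remains to match the cost entry, which is the one genuinely delicate step. Expanding $J(x)=x^{T}Mx-2x^{T}c+d$ with $M=\hat{Q}+A^{T}\hat{W}A$, $c=A^{T}\hat{W}b+\hat{\lambda}E_{A}^{T}E_{b}$ and $d=b^{T}\hat{W}b+\hat{\lambda}E_{b}^{T}E_{b}$, the normal equation $Mx^{*}=c$ yields the simplification $J(x^{*})=d-(x^{*})^{T}c$. On the array side, the upper three blocks of $\mathcal{M}^{-1}$ applied to the right vector reduce to $\hat{W}(b-Ax^{*})$ and $\hat{\lambda}(E_{b}-E_{A}x^{*})$; contracting these against the second block row $\begin{bmatrix}0&b^{T}&E_{b}^{T}&0\end{bmatrix}$ of the left multiplier gives $b^{T}\hat{W}(b-Ax^{*})+\hat{\lambda}E_{b}^{T}(E_{b}-E_{A}x^{*})$, which regroups into $d-(x^{*})^{T}c=J(x^{*})$. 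I expect the main obstacle to be bookkeeping rather than conceptual --- tracking the block structure cleanly through the bordered inverse and, above all, invoking the identity $J(x^{*})=d-(x^{*})^{T}c$ so that the scalar manufactured by the array coincides with the optimal value of \autoref{lemma_lambda}. A minor preliminary is to confirm that $\hat{W}$ is invertible and $\hat{\lambda}>0$, so that the entries $\hat{W}^{-1}$ and $\hat{\lambda}^{-1}I$ in the array are well defined; this follows from the definitions and the constraint $\lambda\ge\|H^{T}WH\|$ in \autoref{lemma_lambda}.
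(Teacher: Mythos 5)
Your proof is correct, but it cannot be ``the same as the paper's'' for a simple reason: the paper offers no argument for \autoref{lemma2} at all --- its proof is the single line ``See \cite{cerri2009recursive}.'' What you have produced is therefore a self-contained derivation where the paper only has a pointer, and it is the natural one: viewing the array as a saddle-point matrix $\begin{bmatrix} D & C \\ C^{T} & 0 \end{bmatrix}$ with $D=\mathrm{diag}(Q^{-1},\hat{W}^{-1},\hat{\lambda}^{-1}I)$, computing the Schur complement $S=-C^{T}D^{-1}C=-(\hat{Q}+A^{T}\hat{W}A)$ (invertible since $Q\succ 0$), and reading off that the bordered inverse applied to the right-hand vector has bottom block $-S^{-1}C^{T}D^{-1}\bigl[\,0;\,b;\,E_{b}\,\bigr]=(\hat{Q}+A^{T}\hat{W}A)^{-1}(A^{T}\hat{W}b+\hat{\lambda}E_{A}^{T}E_{b})=x^{\ast}$, exactly matching \autoref{lemma_lambda}. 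Your treatment of the cost row is also sound, and it is the one step where a sentence of justification should be made explicit: the scalar $J(x^{\ast})$ in the array is the worst-case (min-max) value, which by \autoref{lemma_lambda} equals $\Gamma(\hat{\lambda})=\|x^{\ast}\|^{2}_{Q}+\hat{\lambda}\|E_{A}x^{\ast}-E_{b}\|^{2}+\|Ax^{\ast}-b\|^{2}_{\hat{W}}$; expanding this gives precisely your $x^{T}Mx-2x^{T}c+d$ with $M=\hat{Q}+A^{T}\hat{W}A$, $c=A^{T}\hat{W}b+\hat{\lambda}E_{A}^{T}E_{b}$, $d=b^{T}\hat{W}b+\hat{\lambda}E_{b}^{T}E_{b}$, so the normal equation $Mx^{\ast}=c$ yields $J(x^{\ast})=d-c^{T}x^{\ast}$, which is exactly the scalar $b^{T}\hat{W}(b-Ax^{\ast})+\hat{\lambda}E_{b}^{T}(E_{b}-E_{A}x^{\ast})$ that the second row of the left multiplier extracts. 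What your route buys is transparency and verifiability within the paper itself; what the citation buys the authors is brevity and deference to the source where the framework (including the role of $\hat{\lambda}$ and the pseudo-inverse in $\hat{W}$) is developed in full. The only loose end, which you correctly flag as minor, is the implicit standing assumption $\hat{\lambda}>0$ (and hence $\hat{W}\succeq W\succ 0$) needed for the entries $\hat{\lambda}^{-1}I$ and $\hat{W}^{-1}$ to exist; this degenerates only when $H=0$, i.e., when there is no uncertainty at all.
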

\begin{proof}
See \cite{cerri2009recursive}.
\end{proof}
\subsection{Robust Linear Quadratic Regulator}

The optimisation problem (\ref{eq:rlqrminmax})-(\ref{eq:rqlqrcostfun}) is solved based on the solution of a general robust regularised least-squares problem \cite{Cerri2014}. Back to the solution presented in \autoref{lemma_lambda}, with $\mu > 0$, the RLQR has an optimal operation point for each step $k$ of the algorithm. When suitable identifications of (\ref{eq:rlqrminmax})-(\ref{eq:rqlqrcostfun}) with (\ref{rob_reg_least_sqr_prob})-(\ref{uncertain_model}) are carried out, the regularisation of the robust regulator is reached thanks to minimisation over both $x_{i+1}(\mu)$ and $u_{i}(\mu)$ \cite{Cerri2014}:
\begin{equation}
\label{identifications}
\setlength{\arraycolsep}{1pt}
Q\leftarrow{\begin{bmatrix}
P_{i+1} & 0\\
0 & R_{i}
\end{bmatrix}},\,\,\,x\leftarrow{\begin{bmatrix}
x_{i+1}(\mu)\\
u_{k}(\mu)
\end{bmatrix}}, 
W\leftarrow{\begin{bmatrix}
Q_{i} & 0\\
0 & \mu{I}
\end{bmatrix}},\,\,\,\nonumber
\end{equation}
\begin{equation}
\setlength{\arraycolsep}{1pt}
A\leftarrow{\begin{bmatrix}
0 & 0\\
I & -G_{i}
\end{bmatrix}},\,\,\,\delta{A}\leftarrow{\begin{bmatrix}
0 & 0\\
0 & -\delta G_{i}
\end{bmatrix}},\,\,\, \Delta \leftarrow \Delta_{i},\,\,\,\nonumber
\end{equation}\begin{equation}\setlength{\arraycolsep}{1pt}b\leftarrow{\begin{bmatrix}
-I\\
F_{i}
\end{bmatrix}
x_{i}},\,\,\,\delta{b}\leftarrow{\begin{bmatrix}
0\\
\delta F_{i}
\end{bmatrix}
x_{i}}, \nonumber
\end{equation}
\begin{equation} \setlength{\arraycolsep}{1pt}\label{rlqr_ident}
H\leftarrow{\begin{bmatrix}
0\\
H_{i}
\end{bmatrix}},\,\,\,
E_{A}\leftarrow{\begin{bmatrix} 0 & -E_{G_{i}}\end{bmatrix}},\,\,\,E_{b}\leftarrow{E_{F_{i}}x_{i}},
\end{equation}

The following theorem shows a framework given in terms of an array of matrices with the purpose of calculating the optimal cost function, control input and state trajectory.
\begin{theorem}
\label{theorem2}
For each $\mu > 0$ in the optimisation problem (\ref{eq:rlqrminmax})-(\ref{eq:rqlqrcostfun}), the optimal solution is given by
\begin{equation} 
\label{exp_a}
\begin{bmatrix}
x^{\ast}_{i+1}(\mu)\\
u^{\ast}_{i}(\mu)\\
\tilde{J}^{\mu}_{i}(x^{\ast}_{i+1}(\mu),u^{\ast}_{i}(\mu))
\end{bmatrix}=\begin{bmatrix}
I & 0 & 0 \\
0 & I & 0 \\
0 & 0 & x_{i}(\mu)^{T}
\end{bmatrix}^{T}\begin{bmatrix}
L_{i,\mu}\\
K_{i,\mu}\\
P_{i,\mu}
\end{bmatrix}x_{i},
\end{equation}
where the closed-loop system matrix $L_i$ and the feedback gain $K_i$ result from the recursion
\begin{equation}
\label{eq:rlqrsolution}
\begin{bmatrix}
L_{i} \\ K_{i} \\ P_{i}
\end{bmatrix} =
\begin{bmatrix}
0 & 0 & -I & \mathcal{F}_{i} & 0 & 0\\
0 & 0 & 0 & 0 & 0 & I\\ 
0 & 0 & 0 & 0 & I & 0
\end{bmatrix}
\Xi^{-1}
\begin{bmatrix}
0 \\ 0 \\ -I \\ \mathcal{F}_{i} \\ 0 \\ 0
\end{bmatrix},
\end{equation}
with
\begin{equation}
\Xi = \begin{bmatrix}
P^{-1}_{i+1} & 0 & 0 & 0 & I & 0 \\
0 & R^{-1}_{i} & 0 & 0 & 0 & I \\
0 & 0 & Q^{-1}_{i} & 0 & 0 & 0 \\
0 & 0 & 0 & \Sigma_{i}\left(\mu,\hat \lambda_{i}\right) & \mathcal{I} & -\mathcal{G}_{i} \\
I & 0 & 0 & \mathcal{I}^{T} & 0 & 0 \\
0 & I & 0 & -\mathcal{G}^{T} & 0 & 0
\end{bmatrix}, \nonumber
\end{equation}
\begin{align*}
& \Sigma_{i} = \begin{bmatrix}
\mu^{-1}I - \hat{\lambda}^{-1}_{i}H_{i}H_{i}^{T}  & 0 \\
0 & \hat{\lambda}^{-1}_{i}I
\end{bmatrix}, \nonumber \\
& \mathcal{I} = \begin{bmatrix}
I \\ 0
\end{bmatrix}, \
\mathcal{G}_{i} = \begin{bmatrix}
G_{i} \\ E_{G_{i}}
\end{bmatrix}, \
\mathcal{F}_{i} = \begin{bmatrix}
F_{i} \\ E_{F_{i}}
\end{bmatrix},
\end{align*}
\noindent where $P_{i+1}$ is the solution of the associated Riccati Equation and $\lambda_i> \|\mu H_i^TH_i\|$ \cite{Sayed2001}. Furthermore, alternatively one has

\begin{equation}
\label{alternative}
\begin{aligned}
P_{i,\mu} = &L^{T}_{i,\mu}P_{i+1}L_{i,\mu} + K_{i,\mu}R_{i}K_{i,\mu} + Q_{i}+\\
&(\mathcal{I}L_{i,\mu} - \mathcal{G}_{i}K_{i,\mu} - \mathcal{F}_{i})^{T}\Sigma_{i,\mu}^{-1}(\mathcal{I}L_{i,\mu}-\mathcal{G}_{i}K_{i,\mu}-\mathcal{F}_{i}) \succ 0.
\end{aligned}
\end{equation}
\end{theorem}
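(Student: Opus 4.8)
The plan is to recognise, following the Remark, that the min--max problem \eqref{eq:rlqrminmax}--\eqref{eq:rqlqrcostfun} is a structured instance of the robust regularised least-squares problem \eqref{rob_reg_least_sqr_prob}--\eqref{uncertain_model}, and then to specialise the array solution of \autoref{lemma2}. First I would verify that the identifications in \eqref{rlqr_ident} genuinely reproduce the RLQR cost. Taking the unknown as the stacked vector $x\leftarrow[x_{i+1}(\mu)^{T}\ u_i(\mu)^{T}]^{T}$, a direct substitution shows that $(A+\delta A)x-(b+\delta b)$ equals $\Phi$, that $\|x\|_Q^2 = x_{i+1}^{T}P_{i+1}x_{i+1}+u_i^{T}R_iu_i$, and that the weight $W\leftarrow\mathrm{diag}(Q_i,\mu I)$ turns $\|(A+\delta A)x-(b+\delta b)\|_W^2$ into the penalised term $\Phi^{T}\,\mathrm{diag}(Q_i,\mu I)\,\Phi$. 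Hence the two objectives coincide term by term and \autoref{lemma2} applies verbatim.

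The core of the argument is then purely algebraic: substitute the block forms of $Q,W,A,E_A,H,b,E_b$ from \eqref{rlqr_ident} into the $4\times4$ block array of \autoref{lemma2} and reorganise it into the $6\times6$ array $\Xi$. Two simplifications drive this. (i) The matrix-inversion identity $\hat W^{-1}=W^{-1}-\hat\lambda^{-1}HH^{T}$ applied to $\hat W=W+WH(\hat\lambda I-H^{T}WH)^{\dagger}H^{T}W$; with $W^{-1}=\mathrm{diag}(Q_i^{-1},\mu^{-1}I)$ and $HH^{T}=\mathrm{diag}(0,H_iH_i^{T})$ this yields $\hat W^{-1}=\mathrm{diag}\bigl(Q_i^{-1},\ \mu^{-1}I-\hat\lambda^{-1}H_iH_i^{T}\bigr)$. (ii) A block row/column permutation that peels off the $P_{i+1}^{-1}$, $R_i^{-1}$ and $Q_i^{-1}$ blocks as separate diagonal entries and merges the residual block $\mu^{-1}I-\hat\lambda^{-1}H_iH_i^{T}$ with the $\hat\lambda^{-1}I$ block coming from the $E_A$-row into the single matrix $\Sigma_i$. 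In the same rearrangement the off-diagonal data assemble into $\mathcal I=[I\ 0]^{T}$ (from the $I$ in $A$ and the $0$ in $E_A$), $\mathcal G_i=[G_i\ E_{G_i}]^{T}$ (from $-G_i$ in $A$ and $-E_{G_i}$ in $E_A$), and $\mathcal F_i=[F_i\ E_{F_i}]^{T}$ (from $F_i$ in $b$ and $E_{F_i}$ in $E_b$); factoring the common $x_i$ out of $b$ and $E_b$ turns the outer factors $[0\ 0\ 0\ I]^{T}$ and $[0\ b\ E_b\ 0]^{T}$ of \autoref{lemma2} into the selection matrices in \eqref{eq:rlqrsolution}.

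Reading the three block-rows of the resulting array then gives $x^{\ast}_{i+1}(\mu)=L_{i,\mu}x_i$, $u^{\ast}_i(\mu)=K_{i,\mu}x_i$ and $\tilde J^{\mu}_i=x_i^{T}P_{i,\mu}x_i$, i.e. the recursion \eqref{eq:rlqrsolution}, where $P_{i+1}$ enters as the cost-to-go weight produced at the previous step of the backward sweep. Invertibility of $\Xi$ (equivalently of the $\Sigma_i$ block) is guaranteed by $\lambda_i>\|\mu H_i^{T}H_i\|$, which forces $\mu^{-1}I-\hat\lambda^{-1}H_iH_i^{T}\succ0$ and hence $\Sigma_i\succ0$.

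Finally, for the alternative expression \eqref{alternative} I would substitute $x^{\ast}_{i+1}=L_{i,\mu}x_i$ and $u^{\ast}_i=K_{i,\mu}x_i$ back into the cost and carry out the inner maximisation over the uncertainties; by the robust least-squares construction this replaces the $W$- and $\mu$-weighted dynamics residual by the $\Sigma_i^{-1}$-weighted reduced residual $(\mathcal I L_{i,\mu}-\mathcal G_i K_{i,\mu}-\mathcal F_i)x_i$. Collecting the quadratic forms $x_{i+1}^{T}P_{i+1}x_{i+1}$, $u_i^{T}R_iu_i$ and $x_i^{T}Q_ix_i$ and matching $\tilde J^{\mu}_i=x_i^{T}P_{i,\mu}x_i$ yields the stated Riccati-type identity; positive definiteness follows because $Q_i\succ0$ while $\Sigma_i\succ0$ makes the last term positive semidefinite. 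The main obstacle is step (ii): keeping the block bookkeeping of the permutation exact so that $\Sigma_i,\mathcal I,\mathcal G_i,\mathcal F_i$ and the outer selectors come out precisely as written; the conceptual reduction to \autoref{lemma2} is routine once the identifications are checked.
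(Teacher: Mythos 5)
Your proposal is correct and follows essentially the same route as the paper, whose proof consists precisely of invoking \autoref{lemma2} together with the identifications \eqref{identifications}--\eqref{rlqr_ident} and the results of \cite{cerri2009recursive}; you have simply carried out the algebra the paper leaves to the reference (checking that the identifications reproduce $\Phi$ and the cost, collapsing $\hat{W}^{-1}=W^{-1}-\hat{\lambda}^{-1}HH^{T}$ via the matrix-inversion lemma, and permuting the blocks into $\Xi$). One caveat: reading the block rows off \autoref{lemma2} as you describe actually yields $L_{i}$ from the selector row $[0\ 0\ 0\ 0\ I\ 0]$, $K_{i}$ from $[0\ 0\ 0\ 0\ 0\ I]$ and $P_{i}$ from $[0\ 0\ -I\ \mathcal{F}_{i}^{T}\ 0\ 0]$ (consistent with Step 1 of \autoref{rlqr_algor}), so the selector matrix as printed in \eqref{eq:rlqrsolution} has its first and third rows interchanged --- a typographical slip in the statement rather than a gap in your argument.
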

\begin{proof}
It follows from \autoref{lemma2}, identifications performed in (\ref{identifications}) and results shown in \cite{cerri2009recursive}.
\end{proof}
\autoref{rlqr_algor} shows the Robust Linear Quadratic Regulator obtained with \autoref{lemma_lambda}. The parameter $\mu$ is associated with system robustness. It is responsible for ensuring the RLQR regularisation and validity of the equality (\ref{eq:statespaceDT}). For maximum robustness, $\mu \rightarrow \infty$ and consequently $\Sigma_i \rightarrow 0$.
\vspace{2mm}
\begin{center}
\begin{algorithm}[H]
\label{rlqr_algor}
\SetAlgoLined
\textbf{Uncertain model:} Consider the model (\ref{eq:statespaceDT})-(\ref{eq:RLQRuncertainties}) and criterion (\ref{eq:rlqrminmax})-(\ref{eq:rqlqrcostfun}) with known\\ $F_{i}$, $G_{i}$, $E_{F_{i}}$, $E_{G_{i}}$, $Q_{i} \succ 0$, and $R_{i} \succ 0$ for all $i$.\\
\textbf{Initial conditions:} Define $x_{0}$ and $P_{i,N}\succeq{0}$.\\
\textbf{Step 1:} \textit{(Backward)} For all $i=N-1,\ldots,0$, compute\\
$\hfill\begin{bmatrix}
L_{i}\\
K_{i}\\
P_{i}
\end{bmatrix}
=
\begin{bmatrix}
0 & 0 & 0 \\
0 & 0 & 0 \\
0 & 0 & -I \\
0 & 0 & F_{i} \\
0 & 0 & E_{F_{i}} \\
I & 0 & 0 \\
0 & I & 0
\end{bmatrix}^{T}
\begin{bmatrix}
P_{i+1}^{-1}& 0 & 0 & 0 & 0 & I & 0\\
0 & R_{i}^{-1} & 0 & 0 & 0 & 0 & I\\
0 & 0 & Q_{i}^{-1} & 0 & 0 & 0 & 0\\
0 & 0 & 0 & 0 & 0 & I & -G_{i}\\
0 & 0 & 0 & 0 & 0 & 0 & -E_{G_{i}}\\
I & 0 & 0 & I & 0 & 0 & 0\\
0 & I & 0 & -G_{i}^{T} & -E_{G_{i}}^{T} & 0 & 0
\end{bmatrix}^{-1}
\begin{bmatrix}
0\\
0\\
-I\\
F_{i}\\
E_{F_{i}}\\
0\\
0
\end{bmatrix}.\hfill$\\
\textbf{Step 2:} \textit{(Forward)} For each $i=0,...,N-1$, obtain\\
$\hfill\begin{bmatrix}
x^{*}_{i+1}\\
u^{*}_{i}
\end{bmatrix}
=
\begin{bmatrix}
L_{i}\\
K_{i}
\end{bmatrix}
x^{*}_{i},\hfill$ \\
with the total cost given by $J_{r}^{*}=x_{0}^{T}P_{0}x_{0}$.
\\
 \caption{The Robust Linear Quadratic Regulator}
\end{algorithm}
\end{center}
\vspace{2mm}
For each iteration of (\ref{alternative}), the matrix $P_{i,\mu}$ is finite and $\mathcal{I}L_{i,\mu}-\mathcal{G}_{i}K_{i,\mu}-\mathcal{F}_{i} \rightarrow 0$, as shown in \cite{Cerri2014}. Therefore,
\begin{equation}
\label{feedback}
\begin{aligned}
&L_{i,\infty} = F_{i}+G_{i}K_{i,\infty}\\
&E_{F_{i}} + E_{G_{i}}K_{i,\infty} = 0,
\end{aligned}
\end{equation}
and a sufficient condition that satisfy (\ref{feedback}) is
\begin{equation}
\label{eq:rank_cond}
rank\,\big(\begin{bmatrix}
E_{F_{i}} & E_{G_{i}}
\end{bmatrix}\big) = rank\,\big(E_{G_{i}}\big).
\end{equation}
Convergence and stability analyses of the RLQR are made through direct identifications with the standard optimal regulator problem for systems not subject to uncertainties. It resembles the standard LQR where the stability is directly related with the positiveness of the Riccati equation solution \cite{Cerri2014}.
More details on convergence and stability analysis can be found in~\cite{Cerri2014}.

\section{Numerical results and discussion }\label{application_and_results}

For the controller validation, the RLQR was performed and compared with the $\mathcal{H}_{\infty}$ controller in various operational conditions. The Matlab/Simulink simulation software was used for this purpose. Simulations consist of minimising the lateral displacement and orientation errors. A double lane-change manoeuvre was performed during $30$ seconds with the sampling period being 0.01 seconds and the nominal payload subject to uncertainties. \autoref{lane-change} shows the scenario of studied cases, where $\varepsilon$ is the tractor width. Furthermore, \autoref{parameters-values} shows the vehicle parameters and the necessary information to calculate it, obtained from websites for commercial vehicles\footnote{https://www.scania.com} and towing implement\footnote{http://www.librelato.com.br} manufacturers. For all cases, the initial conditions are the same for both controllers, those being $x_{0} = [0,0,0,0,0.3,-0.1]^{T}$, the penalty parameter $\mu=10^{8}$, 
\begin{equation*}
H =
\begin{bmatrix}
1\\
1\\
1\\
1\\
1\\
1
\end{bmatrix},\hspace{1mm}E_{F} = 
\begin{bmatrix}
6.8572\times 10^{-5}\\
-8.6201\times 10^{-5}\\
-2.1440\times 10^{-5}\\
-10.4924\times 10^{-5}\\
0\\
-666.66667\times 10^{-5}
\end{bmatrix}^{T}\textnormal{and} \hspace{1mm}E_{G} =
\begin{bmatrix}
-666.66667\times 10^{-5}\\
-666.66667\times 10^{-5}
\end{bmatrix}^{T}.
\end{equation*}
The methodology used to calculate the uncertainty matrices is described in \ref{sec:uncertainties}.

The uncertainty parameters were chosen as vectors, this implies that the condition of existence of the controller (\ref{eq:rank_cond}) is satisfied, regardless of the numerical values of the uncertainty parameters.
\begin{figure}[H]
      \centering
      \includegraphics[scale=0.55]{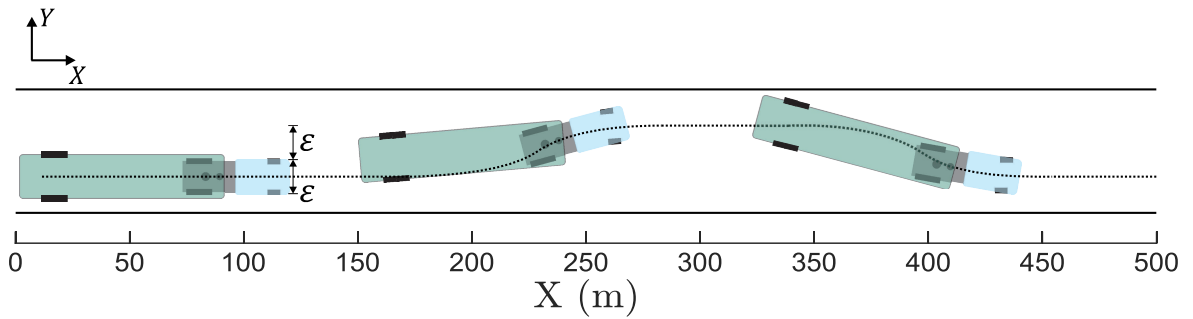}
      \caption{Double lane change scenario.}
      \label{lane-change}
\end{figure}
\begin{table}[H]
\caption{Vehicle parameters values}
\label{parameters-values}
\begin{center}
\begin{tabular}{c l}
\hline
Parameter & Value\\
\hline
$a_{1}$  & $1.734\hspace{1mm}m$\\
$a_{2}$  & $4.8\hspace{1mm}m$\\
$b_{1}$  & $2.415\hspace{1mm}m$\\
$b_{2}$ & $3.2\hspace{1mm}m$\\
$l_{1}$ & $4.149\hspace{1mm}m$\\
$l_{2}$ & $8.0\hspace{1mm}m$\\
$d_{1}$ & $-0.29\hspace{1mm}m$\\
$h_{1}$ & $2.125\hspace{1mm}m$\\
$l_{1}^{*}$ & $3.859\hspace{1mm}m$\\
$\varepsilon$ & $2.6 \hspace{1mm} m$\\
$v$ & $16.667\hspace{1mm}m/s$\\
$m_{1}$ & $8909\hspace{1mm}kg$\\
$m_{2}$ & $9370\hspace{1mm}kg$\\
Payload & $24000\hspace{1mm}kg$\\
$J_{1}$ & $41566\hspace{1mm}kg \hspace{1mm} m^{2}$\\
$J_{2}$ & $404360\hspace{1mm}kg \hspace{1mm} m^{2}$\\
$c_{1}$ & $345155\hspace{1mm}N/rad$ \\
$c_{2}$ & $927126\hspace{1mm}N/rad$\\
$c_{3}$ & $1158008\hspace{1mm}N/rad$\\
\hline
\end{tabular}
\end{center}
\end{table}

The normalised cornering stiffness was applied in all cases studied here as it is a satisfactory representation for most applications and conditions \cite{luijten2010lateral}. Thus, it was calculated as a function of vertical load by assuming $f = f_{1} = f_{2} = f_{3} = 5.73\hspace{1mm}rad^{-1}$. In addition, the maximum steering angle ($0.44\hspace{1mm}rad$) was taken into account in numerical results.

The linear system must be rewritten in order to compare the $\mathcal{H}_{\infty}$ control and the robust recursive regulator presented in this paper. Thus, the robust control design considering the $\mathcal{H}_{\infty}$ method discussed by Hassib \textit{et al.} \cite{hassibi1999indefinite} was used. Its equations, identifications and formulation are given in \ref{h_infty_app}. 

\autoref{blocks-diagrams} gives the block diagrams for both control techniques, where $e_{i}$ is the error between the reference and output, and $x_{ref}$ is the reference state vector. Since we aim to minimise the state variable errors, the control law is $u_{i} = K_{i}e_{i}$. Both $e_{i}$ and $x_{ref}$ are obtained when a reference control signal is applied to the lateral model of the vehicle.
\begin{figure}[H]
      \centering
      \begin{subfigure}[b]{0.5\textwidth}
      \centering
      \label{rlqr-diagram}
      \includegraphics[scale=0.3]{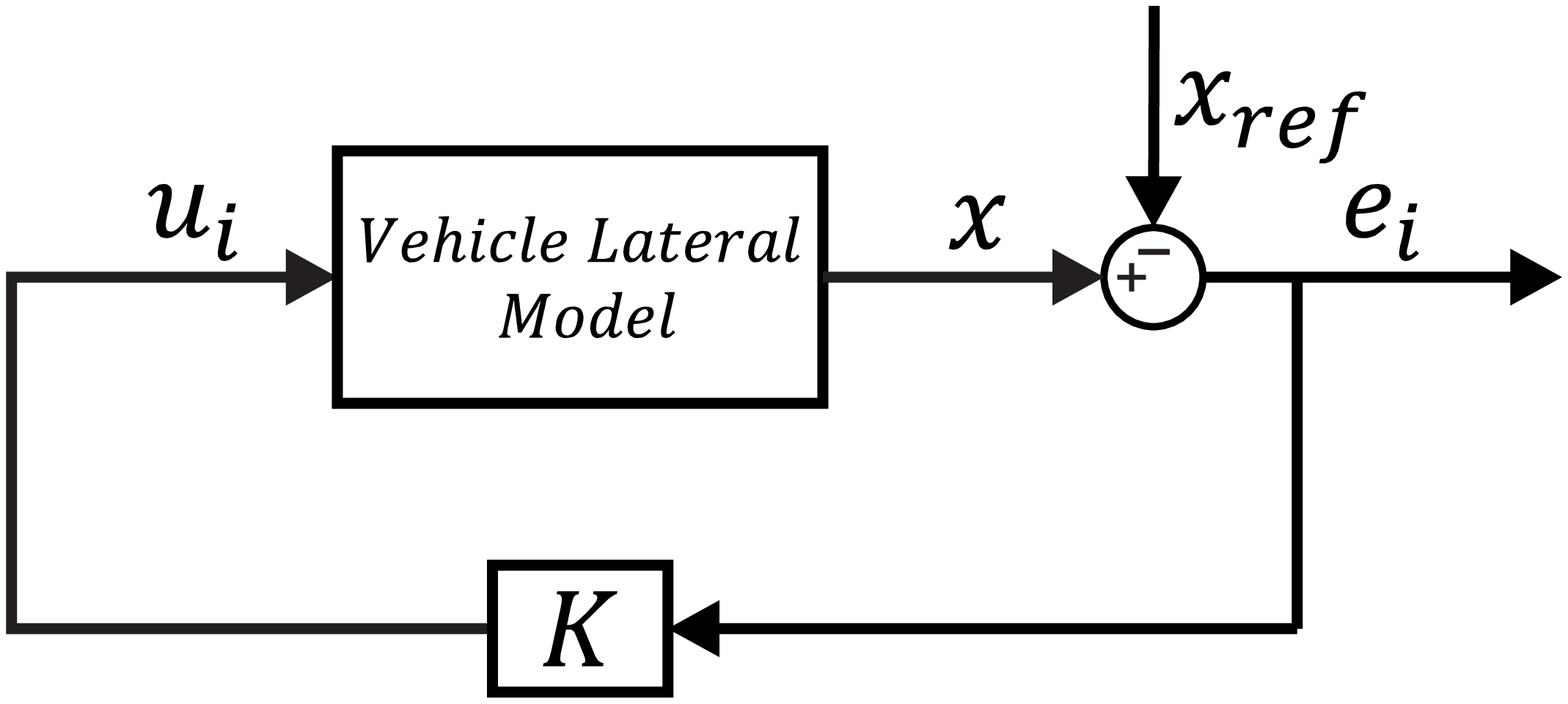}
      \caption{RLQR block diagram.}
      \end{subfigure}
      \begin{subfigure}[b]{0.45\textwidth}
      \centering
      \label{h_infinity-diagram}
      \includegraphics[scale=0.3]{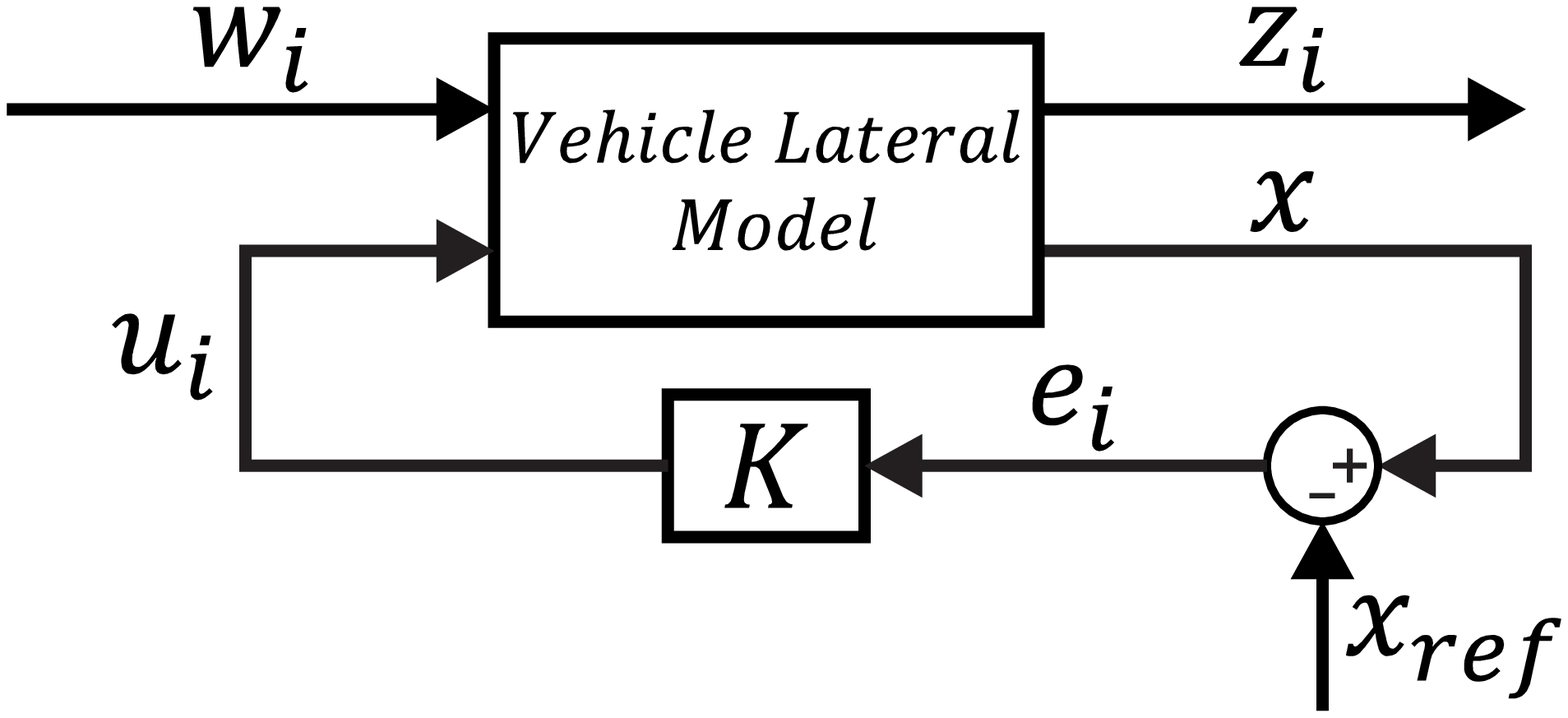}
      \caption{$\mathcal{H}_{\infty}$ control system block diagram.}
      \end{subfigure}
      \caption{Block diagrams for Robust Linear Quadratic Regulator and $\mathcal{H}_{\infty}$ control systems.}
      \label{blocks-diagrams}
\end{figure}
\subsection{System response}

The articulated heavy vehicle behaviour was evaluated with numerical results by taking a given reference path. For this purpose, the lateral velocity, yaw rate, articulation angle rate, articulation angle, lateral displacement and orientation error of the vehicle were observed. Moreover, controller evaluation was done through graphic analysis, and by adopting maximum steering rate and $\mathcal{L}_2$ norm of the error as performance criteria.

\autoref{cases} and \autoref{l_2-norm} show maximum steering rate, payload variations, and $\mathcal{L}_2$ norm of lateral displacement and orientation errors for both performed controllers, respectively. Payload values for every evaluated case were chosen for the best illustration of the influence of mass variation.
\begin{table}[H]
\caption{Evaluated cases for both controllers}
\label{cases}
\begin{center}
\begin{tabular}{c c c c}
\hline
Case  & Payload & $max\|\dot{\alpha}_{RLQR}\|$ & $max\|\dot{\alpha}_{\mathcal{H}_{\infty}}\|$ \\
\hline
1 & 100\% & 0.3432 rad/s & 4.3750 rad/s\\
2 & 234\% & 0.4130 rad/s& 8.4404 rad/s\\
3 & 237\% & 0.4164 rad/s & 9.2350 rad/s\\
4 & 0\% & 0.3333 rad/s & 4.5959 rad/s\\
\hline
\end{tabular}
\end{center}
\end{table}
\begin{table}[H]
\caption{$\mathcal{L}_{2}$ norm of the lateral displacement and orientation errors}
\label{l_2-norm}
\begin{center}
\begin{tabular}{c c c c c}
\hline
 & \multicolumn{2}{c}{$\|\rho\|_{\mathcal{L}_{2}}$} & \multicolumn{2}{c}{$\|\theta\|_{\mathcal{L}_{2}}$}\\
\hline
Case & RLQR & $\mathcal{H}_{\infty}$ & RLQR & $\mathcal{H}_{\infty}$\\
 \hline
1 & 0.3727 & 0.2004 & 0.1481 & 0.0692\\
2 & 0.3886 & 0.1651 & 0.1331 & 0.0793\\
3 & 0.3882 & 0.4055 & 0.1328 & 0.2594\\
4 & 0.3217 & 0.2348 & 0.1358 & 0.0778\\
\hline
\end{tabular}
\end{center}
\end{table}

The nominal payload was applied, and the weight matrices $Q$ and $R$ were adjusted so that the maximum steering rate was $max\|\dot{u}\|_{RLQR} \approx 0.3432\hspace{1mm} rad/s$. In addition, the direct counterpart weight matrices $R^{c}$ and $Q^{c}$ have the same values adjusted in $Q$ and $R$, respectively. Moreover, the robustness parameter $\gamma$ was adjusted to the lowest possible value that ensures $\mathcal{H}_{\infty}$ controller existence. Hence, for every case: 
\begin{gather*}
\gamma = 14350, \hspace{2mm} Q = R^{c} = 
\begin{bmatrix}
1 & 0 & 0 & 0 & 0 & 0\\
0 & 1 & 0 & 0 & 0 & 0\\
0 & 0 & 1 & 0 & 0 & 0\\
0 & 0 & 0 & 1 & 0 & 0\\
0 & 0 & 0 & 0 & 25000 & 0\\
0 & 0 & 0 & 0 & 0 & 100
\end{bmatrix} \text{ and }
R = Q^{c} = 
\begin{bmatrix}
67070 & 0\\
0 & 67070
\end{bmatrix}.
\end{gather*}
Graphics of numerical results for each evaluated case and their case descriptions follows.
\begin{description}
\item[Case 1]Considering nominal payload, \autoref{state-variables1} and \autoref{position-control1} show the system state variables, the global position of tractor centre of mass and steering angle performed by both controllers. 
\begin{figure}[H]
  \centering
  \includegraphics[scale=0.34]{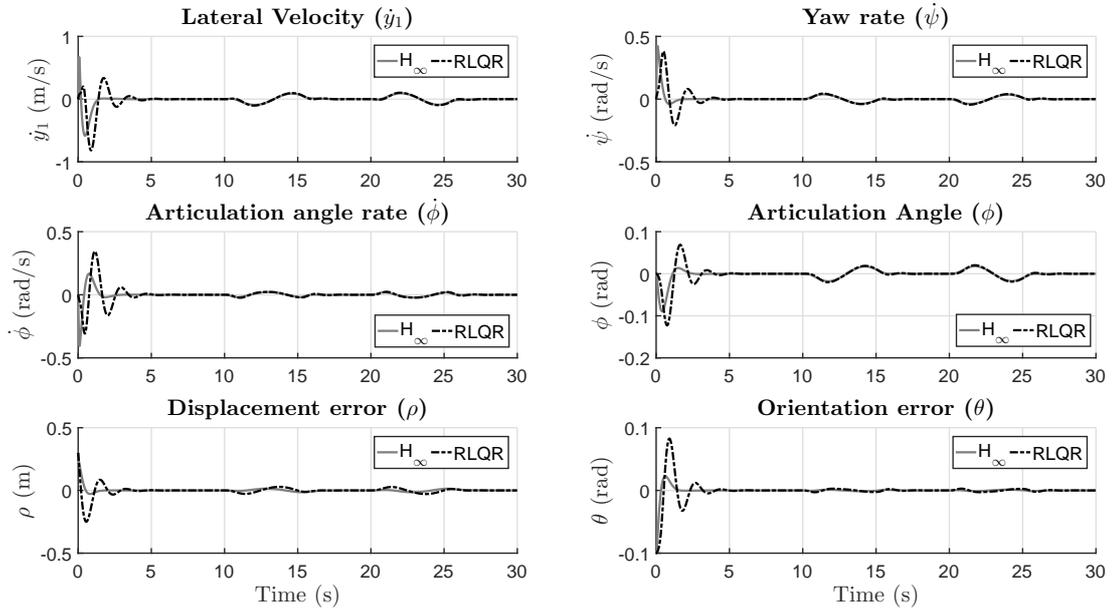}
  \caption{System state variables for case 1.}
  \label{state-variables1}
\end{figure}
\begin{figure}[H]
  \centering
  \includegraphics[scale=0.35]{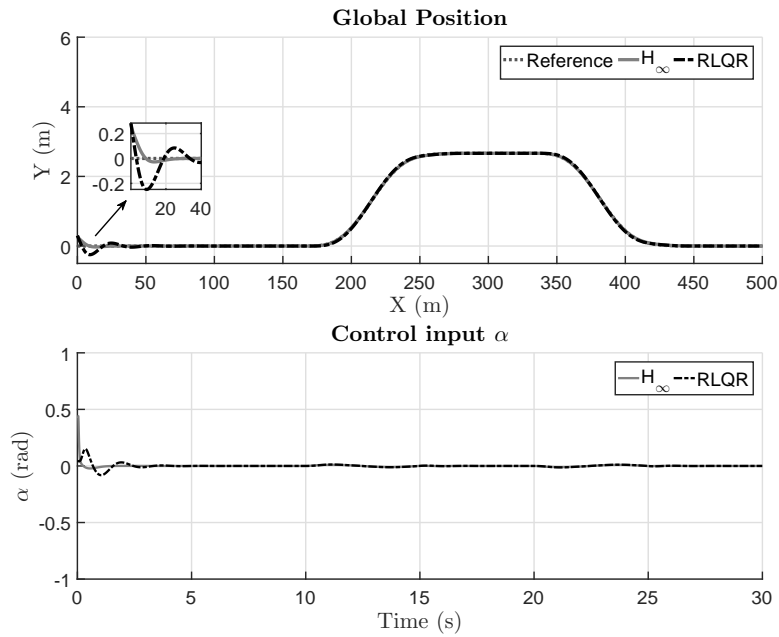}
  \caption{Global position of the tractor centre of mass and steering angle for case 1.}
  \label{position-control1}
\end{figure}
\item[Case 2] Considering $234\%$ of overload over the payload nominal value, \autoref{state-variables2} and \autoref{position-control2} show the system state variables, the global position of tractor centre of mass and steering angle performed by both controllers.
\begin{figure}[H]
  \centering
  \includegraphics[scale=0.34]{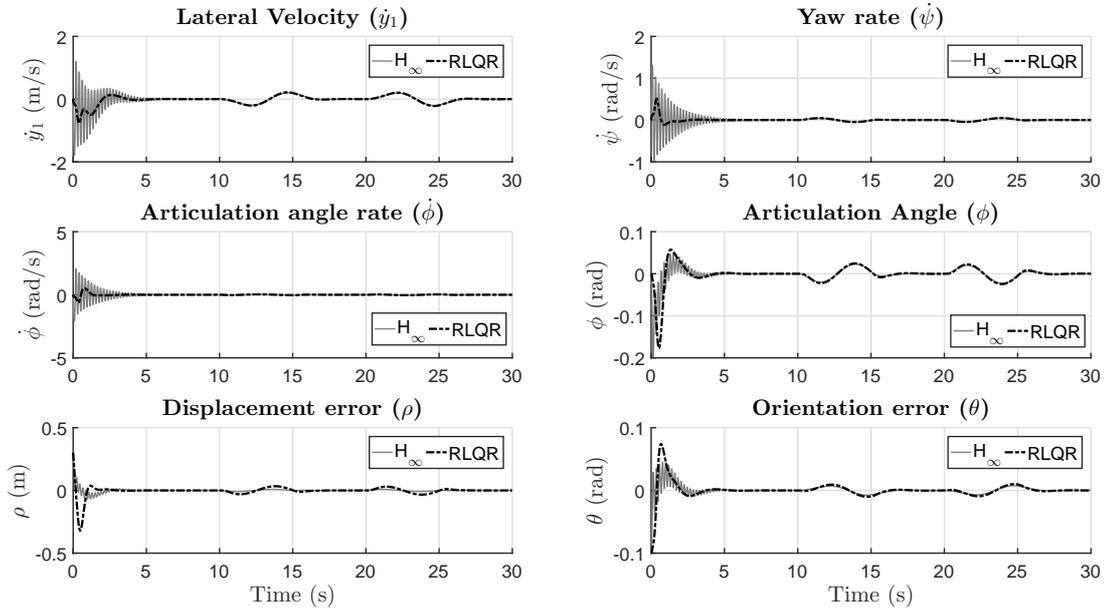}
  \caption{System state variables for case 2.}
  \label{state-variables2}
\end{figure}
\begin{figure}[H]
  \centering
  \includegraphics[scale=0.35]{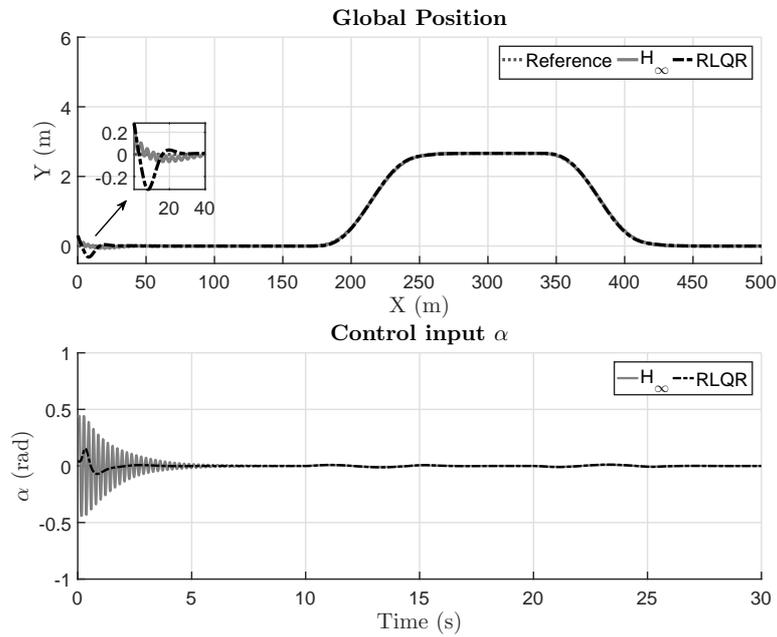}
  \caption{Global position of the tractor centre of mass and steering angle for case 2.}
  \label{position-control2}
\end{figure}
\item[Case 3] Considering $237\%$ of overload over the payload nominal value, \autoref{state-variables3} and \autoref{position-control3} show the system state variables, the global position of tractor centre of mass and steering angle performed by both controllers.
\begin{figure}[H]
  \centering
  \includegraphics[scale=0.34]{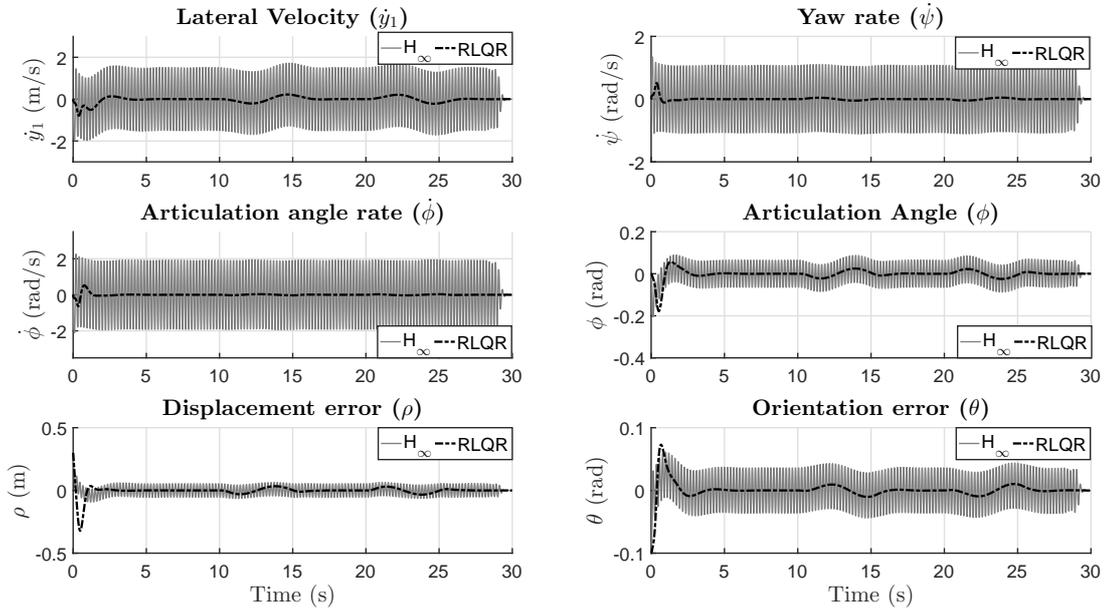}
  \caption{System state variables for case 3.}
  \label{state-variables3}
\end{figure}
\begin{figure}[H]
  \centering
  \includegraphics[scale=0.6]{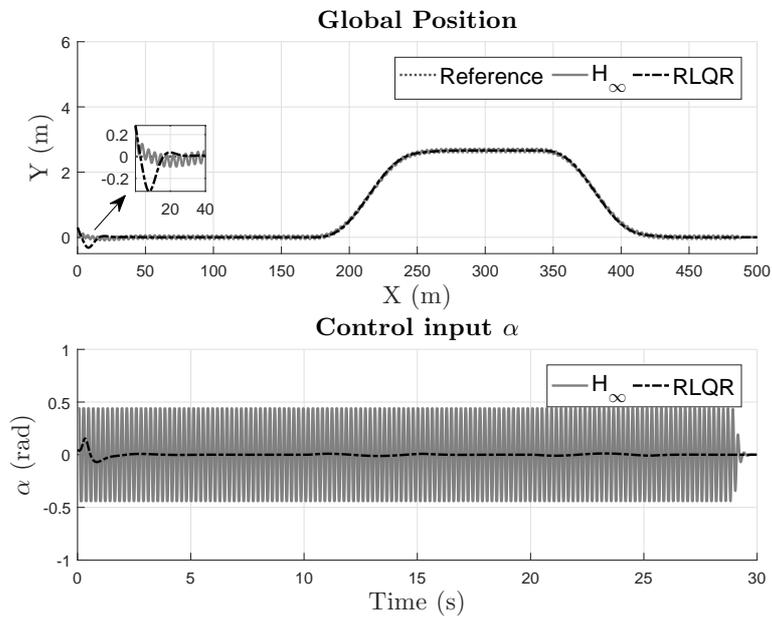}
  \caption{Global position of the tractor centre of mass and steering angle for case 3.}
  \label{position-control3}
\end{figure}
\item[Case 4] Lastly, considering a vehicle without payload, \autoref{state-variables4} and \autoref{position-control4} show the system state variables, the global position of the tractor centre of mass and steering angle performed by both controller. 
\begin{figure}[H]
  \centering
  \includegraphics[scale=0.34]{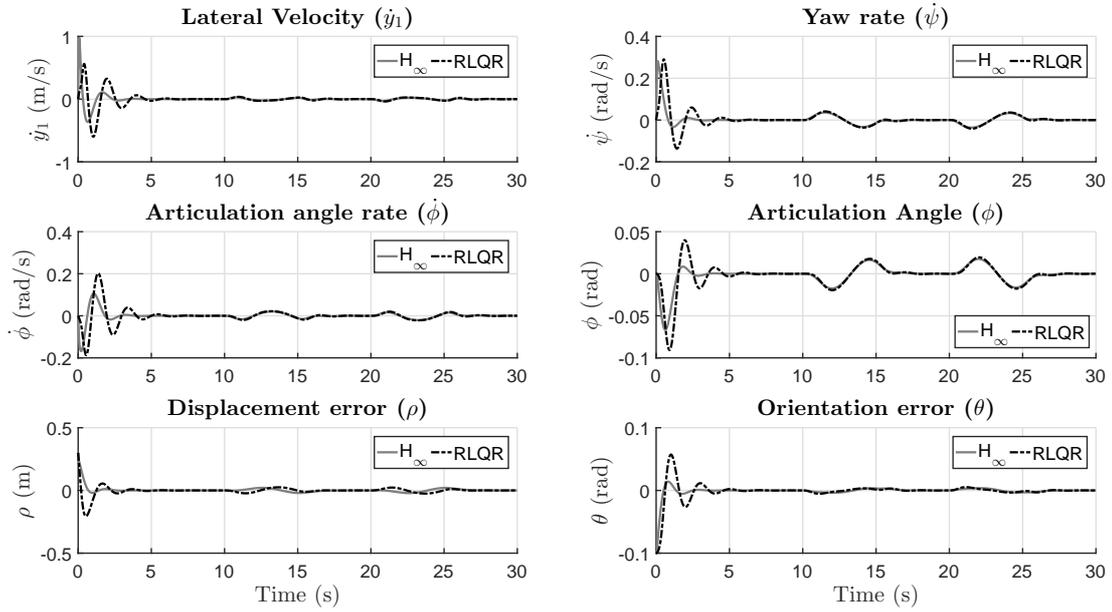}
  \caption{System state variables for case 4.}
  \label{state-variables4}
\end{figure}
\begin{figure}[H]
  \centering
  \includegraphics[scale=0.35]{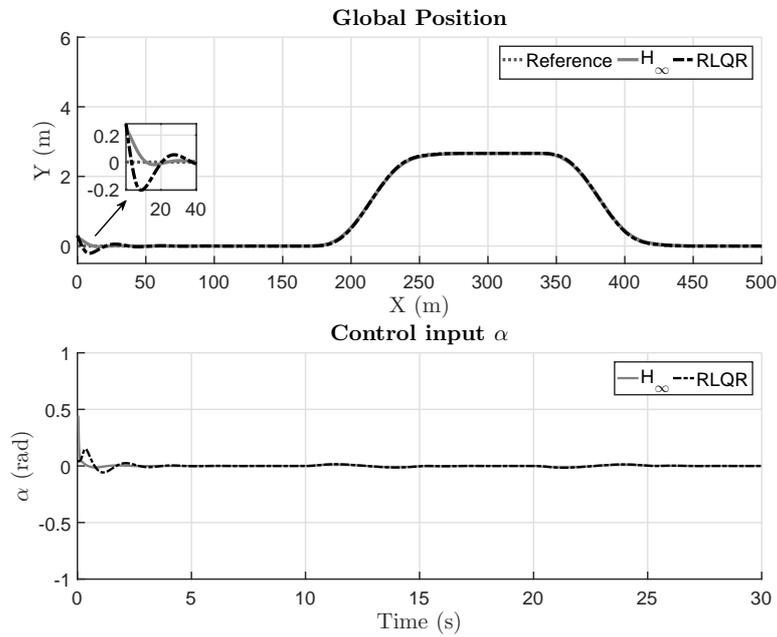}
  \caption{Global position of the tractor centre of mass and steering angle for case 4.}
  \label{position-control4}
\end{figure}
\end{description}
\subsection{Discussion} 

The main goal of these evaluated cases was to show how the Robust Recursive Regulator deals with uncertainties in articulated heavy vehicles. Results demonstrate that the RLQR performance is less affected by payload mass variation than $\mathcal{H}_{\infty}$ controller. It is verified in \autoref{l_2-norm}, where the $\mathcal{L}_{2}$ norm of the lateral displacement and orientation errors of the robust recursive regulator are less affected by mass uncertainties than $\mathcal{H}_{\infty}$ controller.

Moreover, \autoref{cases} shows that, in the presence of uncertainties, the $max\|\dot{u}_{\mathcal{H}_{\infty}}\|$ is severely influenced by parametric variations while $max\|\dot{u}_{RLQR}\|$ is much less affected. This is significant, since high steering angle rates mean abrupt driving, which may not be possible for the mechanical system of the vehicle, representing a safety limitation to the $\mathcal{H}_{\infty}$ controller.

As shown in \autoref{state-variables1}, \autoref{state-variables2}, \autoref{state-variables3} and \autoref{state-variables4}, the performed results for tractor lateral velocity $\dot{y}_1$, tractor yaw rate $\dot{\psi}$, articulation angle rate $\dot{\phi}$ and articulation angle $\phi$ demonstrate that the robust recursive regulator deals better with vehicle lateral dynamic behaviour since its performance is much less affected by mass variations. Furthermore, the results obtained in \autoref{cases} and \autoref{l_2-norm} are shown in  \autoref{state-variables1}-\ref{position-control4}, confirming that the RLQR is still more robust, more stable, smoother and safer than $\mathcal{H}_{\infty}$ in the presence of payload variations. For better performance of the $\mathcal{H}_{\infty}$ controller, the parameter $\gamma$ needs to be adjusted offline for each particular payload. This is very inefficient for practical applications given the wide mass variation in heavy-duty vehicles. The advantage of the RLQR is that it does not require offline adjustment of auxiliary parameters since the penalty parameter $\mu$ ensures smoothness and robustness, maintaining the optimality and good performance for each evaluated case. 

\section{Conclusions}\label{conclusions}

The Robust Linear Quadratic Regulator has been applied to perform the lateral control of an autonomous articulated heavy-duty vehicle subject to parametric uncertainties. Considering uncertainty in the towed mass, RLQR controller performance was better in terms of robustness, lateral stability, driving smoothness and safety when compared to the $\mathcal{H}_{\infty}$ robust control technique. Thus, the robust recursive regulator was demonstrated as a profitable control technique to deal with parametric uncertainties in such vehicle systems. The RLQ controller performs well for a wide range of payloads, while the performance of the $H_{\infty}$ controller is significantly affected by higher payloads, given a constant $\gamma$. Nevertheless, vertical and roll stability cannot be guaranteed because a model-based control design that only considers planar motion was used. 

The robust recursive regulator could be exploited in non-articulated and multi-articulated vehicles in order to perform the path-following and lateral control. For future work, the articulated vehicle system will be extended to three-dimensions representation and experimental results will be obtained.

\section{Acknowledgements}
The authors would like to thank the Coordenação de Aperfeiçoamento de Pessoal de Nível Superior - Brasil - CAPES (Finance Code 001), São Paulo Research Foundation (FAPESP, grant \#2014/50851-0) and Vale S.A. for the financial support.
\bibliography{mybibfile}

\appendix 

\section{Uncertainties matrices}
\label{sec:uncertainties}
In order to estimate the uncertainties matrices $E_{F}$, $E_{G}$ and $H$ in (\ref{eq:RLQRuncertainties}), we considered the inertia uncertainties for deriving a robust control strategy by adopting maximum and minimum values of payload, which results in  $m_{p_{max}}$ and $m_{p_{min}}$, respectively. Setting  these values of mass, the maximum variations of the matrices $F$ and $G$ are calculated as:
\begin{equation}
\Gamma_{F} = F_{m_{p_{min}}}-F_{m_{p_{max}}}
\end{equation}
where $F_{m_{p_{min}}}$ and $F_{m_{p_{max}}}$ are the discretised state-space matrices when $m_{p_{max}}$ and $m_{p_{min}}$ are applied to the state-space system (\ref{state-space}). Thus, we afterwards selected the row in $\Gamma_{F}$ that is most affected by mass variations, here the fifth row.

In this work, $m_{p_{min}}$ and $m_{p_{min}}$ values correspond to the unloaded and $100\%$ of overload vehicle operation. Consequently, this range of uncertainty may imply in large $E_F$ and $E_G$ values, denoted as $E_{F_{100\%}}$ and $E_{G_{100\%}}$. The selection of these values during the control design guarantees the robustness stability for any condition, satisfying the mass variability. However, it jeopardises the performance of the nominal case. Therefore, lower $E_{F}$ and $E_{G}$ values were considered to overcome this problem. This choice is capable of enhancing the robustness of the proposal without jeopardising the system performance.

Thus, the matrices $E_{F_i}$ and $E_{G_i}$ are obtained as follows:

\begin{equation}
E_{F_{i}} = 
\begin{bmatrix}
1\\
1\\
1\\
1\\
1\\
0.1\\
\end{bmatrix}^T
\begin{bmatrix}
\underset{\Gamma_{F_{5,1}}}{arg}(|\Gamma_{F_{5,1}}|) & 0 & 0 & 0 & 0 & 0\\ 0 & \underset{\Gamma_{F_{5,2}}}{arg}(|\Gamma_{F_{5,2}}|) & 0 & 0 & 0 & 0\\ 0 & 0 & \underset{\Gamma_{F_{5,3}}}{arg}(|\Gamma_{F_{5,3}}|) & 0 & 0 & 0\\ 0 & 0 & 0 & \underset{\Gamma_{F_{5,4}}}{arg}(|\Gamma_{F_{5,4}}|) & 0 & 0\\ 0 & 0 & 0 & 0 & \underset{\Gamma_{F_{5,5}}}{arg}(|\Gamma_{F_{5,5}}|) & 0\\ 0 & 0 & 0 & 0 & 0 & \underset{\Gamma_{F_{5,6}}}{arg}(|\Gamma_{F_{5,6}}|)
\end{bmatrix}
\end{equation}
\begin{equation}
\label{Eg_calc}
E_{G_{i}} = 
\begin{bmatrix}
0.1 \\ 
0.1
\end{bmatrix}^T
\begin{bmatrix}
\underset{\Gamma_{F_{5,j}}}{arg} \{max(|\Gamma_{F_{5,j}}|)\} & 0\\
0 & \underset{\Gamma_{F_{5,j}}}{arg} \{max(|\Gamma_{F_{5,j}}|)\}
\end{bmatrix}
\end{equation}
and $H_{i} = [1,1,1,1,1,1]^T$. This way, the uncertainties matrices are obtained through \autoref{eq:RLQRuncertainties}
\begin{equation*}
\begin{bmatrix}
\delta {F}_{i} & \delta G_{i}
\end{bmatrix} = H_{i} \Delta_{i} \begin{bmatrix}
E_{F_{i}} & E_{G_{i}}
\end{bmatrix},
\end{equation*}
where $\Delta_{i}$ is a scalar represented by the mass variation.

\section{\texorpdfstring{$\mathcal{H}_{\infty}$}{Lg} control}
\label{h_infty_app}

The robust control design considering $\mathcal{H}_{\infty}$ method mentioned in \cite{hassibi1999indefinite} is used for the following linear system
\begin{equation}
\label{h_infinity_system}
x_{i+1} = F_{i}x_{i}+G_{1,i}w_{i}+G_{2,i}u_{i}, \hspace{2mm} i=0,...,N,
\end{equation}
where $x_{i}$ is the state vector, $u_{i}$ is the control input and $w_{i}$ is the disturbance. In its sub-optimal formulation, this technique is based on finding a control strategy where for every $x_{0}$ and $\{w_{i}\}^{N}_{i=0}$,
\begin{equation}
\frac{x^{*T}_{N+1}P^{c}_{N+1}x^{*}_{N+1}+\sum^{N}_{i=0}(u^{*T}_{i}Q^{c}_{i}u^{*}_{i}+x^{*T}_{i}R^{c}_{i}x^{*}_{i})}{x^{*T}_{0}\prod^{-1}_{0}x^{*}_{0}+\sum^{N}_{i=0}(w^{*T}_{i}Q^{w}_{i}w^{*}_{i})}<\gamma^{2},
\end{equation}
for a suitable $\gamma>0$, where $P^{c}_{N+1}$, $Q^{c}_{i}$, $R^{c}_{i}$, $\Pi_{0}$ and $Q^{w}_{i}$  are non-negative definite weighing matrices. Such matrices are associated with the final state, control input, state, initial state and disturbance, respectively. The recursive solution of this problem is formulated in terms of backwards Riccati equation and the verification of some existence conditions is necessary.

To perform the control using this technique, uncertain system \eqref{eq:statespaceDT}-\eqref{eq:RLQRuncertainties} must be rewritten as the system \eqref{h_infinity_system}. Hence, the following immediate identifications are considered
\begin{equation}
\begin{aligned}
\label{h_infinity_identification}
F_{i}\leftarrow F_{i},\hspace{2mm} G_{2,i}\leftarrow G_{i} \hspace{2mm} x_{i}\leftarrow x_{i}, \hspace{2mm} &u_{i}\leftarrow u_{i}, \hspace{2mm} G_{1,i} \leftarrow H_{i}, \hspace{2mm} w_{i}\leftarrow \Delta_{i}\begin{bmatrix} E_{F_{i}} & E_{G_{i}} \end{bmatrix}\begin{bmatrix} x_{i}\\u_{i} \end{bmatrix},\\
P^{c}_{N+1}\leftarrow P_{N+1}, \hspace{2mm} Q^{c}_{i} \leftarrow &R_{i}, \hspace{2mm} R^{c}_{i} \leftarrow Q_{i}, \hspace{2mm} Q^{w}_{i} \leftarrow I, \hspace{2mm} \Pi_{0}\leftarrow I.
\end{aligned}
\end{equation}

In order to use the system \eqref{h_infinity_system} as the uncertain system \eqref{eq:statespaceDT}-\eqref{eq:RLQRuncertainties}, some algebraic manipulations are necessary. Considering ${R_{G,i}^{c}}^{-1} = Q_{i}^{c}+G_{2,i}^{T}P^{c}_{N+1}G_{2,i}$, the control signal is obtained from \cite{hassibi1999indefinite} as 
\begin{equation}
\label{u1}
\begin{aligned}
   u_{i} &= -{R_{G,i}^{c}}^{-1}G_{2,i}^{T}P^{c}_{N+1}F_{i}x_{i} -{R_{G,i}^{c}}^{-1}G_{2,i}^{T}P^{c}_{N+1}G_{1,i} w_{i}\\
        &=(-{R_{G,i}^{c}}^{-1}G_{2,i}^{T}P^{c}_{N+1})(F_{i}x_{i}+G_{1,i}w_{i}).
\end{aligned}
\end{equation}
From identifications made in \eqref{h_infinity_identification} substitutions can be made in $G_{1,i}$ and $w_{i}$
\begin{equation*}
\begin{aligned}
   u_{i} &= (-{R_{G,i}^{c}}^{-1}G_{2,i}^{T}P^{c}_{N+1})((F_{i}+H_{i}\Delta E_{F_{i}})x_{i}+H_{i}\Delta E_{G_{i}}u_{i-1})\\
        &= -{R_{G,i}^{c}}^{-1}G_{2,i}^{T}P^{c}_{N+1} (F_{i}+\delta F_{i})x_{i} -{R_{G,i}^{c}}^{-1}G_{2,i}^{T}P^{c}_{N+1}H_{i}\Delta E_{G_{i}}u_{i-1},
\end{aligned}
\end{equation*}
and adding $G_{2}$ in both sides of the equation the \eqref{u1} becomes
\begin{equation}
\label{u2}
\begin{aligned}
   u_{i} &= -{R_{G,i}^{c}}^{-1}G_{2,i}^{T}P^{c}_{N+1} (F_{i}+\delta F_{i})x_{i}
    -{R_{G,i}^{c}}^{-1}G_{2,i}^{T}P^{c}_{N+1}(G_{2,i}+H_{i}\Delta E_{G_{i}})u_{i-1} 
    + {R_{G,i}^{c}}^{-1}G_{2,i}^{T}P^{c}_{N+1}G_{2,i}u_{i-1}\\
        &= -{R_{G,i}^{c}}^{-1}G_{2,i}^{T}P^{c}_{N+1} (F_{i}+\delta F_{i})x_{i}
    -{R_{G,i}^{c}}^{-1}G_{2,i}^{T}P^{c}_{N+1}(G_{2,i}+\delta G_{2,i})u_{i-1} 
    + {R_{G,i}^{c}}^{-1}G_{2,i}^{T}P^{c}_{N+1}G_{2,i}u_{i-1}.
\end{aligned}
\end{equation}
Considering the sampling period sufficiently small so that $x_{i}\approx x_{i-1}$, the \eqref{u2} can be rewritten as 
\begin{equation*}
    u_{i} \coloneqq -{R_{G,i}^{c}}^{-1}G_{2,i}^{T}P^{c}_{N+1}((F_{i}+\delta F_{i})x_{i-1}+(G_{2,i}+\delta G_{2,i})u_{i-1}) +{R_{G}^{c}}^{-1}G_{2}^{T}P^{c}_{N+1}G_{2,i}u_{i-1}.
\end{equation*}
Hence, as $u_{i-1}=z_{i}$, the system \eqref{h_infinity_system} can be rewritten as system \eqref{eq:statespaceDT}-\eqref{eq:RLQRuncertainties} 
\begin{gather*}
\begin{aligned}
x_{i+1} = (F_{i} + \delta F_{i})x_{i} + (G_{2,i}  + \delta G_{2,i})u_{i}\\
\begin{bmatrix}
\delta {F}_{i} & \delta G_{i}
\end{bmatrix} 
= H_{i} \Delta_{i} \begin{bmatrix}
E_{F_{i}} & E_{G_{i}}
\end{bmatrix},
\end{aligned}    
\end{gather*}
where 
\begin{equation}
    u_{i} = -{R_{G,i}^{c}}^{-1}G_{2,i}^{T}P^{c}_{N+1}x_{i} +{R_{G,i}^{c}}^{-1}G_{2,i}^{T}P^{c}_{N+1}G_{2,i}z_{i}.
\end{equation}

See details in \cite{hassibi1999indefinite}.
\end{document}